\newtheorem{lemma}{Lemma}[section]
\newtheorem{theorem}{Theorem}[section]
\begin{document}
\title{Nested Lattice Codes for Arbitrary Continuous Sources and Channels}% can use linebreaks \\
\author{\IEEEauthorblockN{Aria G. Sahebi and S. Sandeep Pradhan\\ \thanks{This work was supported by NSF grants CCF-0915619 and CCF-1116021.}}
\IEEEauthorblockA{Department of Electrical Engineering and Computer Science,\\
University of Michigan, Ann Arbor, MI 48109, USA.\\
Email: \tt\small ariaghs@umich.edu, pradhanv@umich.edu}}

\markboth{Draft}
{Sahebi \MakeLowercase{\textit{et al.}}: Nested Lattice Codes for Arbitrary Continuous Sources and Channels}
\maketitle

\begin{abstract}
In this paper, we show that nested lattice codes achieve the capacity of arbitrary channels with or without non-casual state information at the transmitter. We also show that nested lattice codes are optimal for source coding with or without non-causal side information at the receiver for arbitrary continuous sources.
\end{abstract}

\begin{IEEEkeywords}
linear codes, lattice codes, Gelfand-Pinsker problem, Wyner-Ziv problem
\end{IEEEkeywords}

\IEEEpeerreviewmaketitle

\section{Introduction}
\IEEEPARstart{L}{attice} codes for continuous sources and channels are the analogue of linear codes for discrete sources and channels and play an important role in information theory and communications. Linear/lattice and nested linear/lattice codes have been used in many communication settings to improve upon the existing random coding bounds \cite{korner_marton, nazer_gastpar, debuda_optimal_codes_structure, Linder_debuda_theorem, phiosof_zamir, dinesh_dsc, Sridharan_Jafarian_Lattice_Interference, Loeliger_Averaging_Bound}.\\
In \cite{debuda_optimal_codes_structure} and \cite{Linder_debuda_theorem} the existence of lattice codes satisfying Shannon's bound has been shown. These results have been generalized and the close relation between linear and lattice codes has been pointed out in \cite{Loeliger_Averaging_Bound}. In \cite{Zamir_lattice_quantization}, several results regarding lattice quantization noise in high resolution has been derived and the problem of constructing lattices with an arbitrary quantization noise distribution has been studied in \cite{Gariby_Erez_Quantization_noise}.

Nested lattice codes were introduced in \cite{zamir_Nested_Lattice} where the concept of structured binning is presented. Nested linear/lattice code are important because in many communication problems, specially multi-terminal settings, such codes can be superior in average performance compared to random codes \cite{dinesh_dsc}. It has been shown in \cite{zamir_nested_lattice_wyner_ziv} that nested lattice codes are optimal for the Wyner-Ziv problem when the source and side information are jointly Gaussian. The dual problem of channel coding with state information has been addressed in \cite{urbanke_lattice_Gaussian_Channel} and the optimality of lattice codes for Gaussian channels has been shown. In \cite{Arun_nested_linear} it has been shown that nested linear codes are optimal for discrete channels with state information at the transmitter.

In this paper we focus on two problems: 1) The point to point channel coding with state information at the encoder (the Gelfand-Pinsker problem \cite{Gelfand_Pinsker}) and 2) Lossy source coding with side information at the decoder (the Winer-Ziv problem \cite{Wyner_Ziv_Discrete} \cite{Wyner_Ziv_Continuous}). We consider these two problems in their most general settings i.e. when the source and the channel are arbitrary. We use nested lattice codes with \emph{joint typicality decoding} rather than lattice decoding. We show that in both settings, from an information-theoretic point of view, nested lattice codes are optimal.%We evaluate the information-theoretic performance limits of lattice codes for these problems.

The paper is organized as follows: in Section \ref{section:prelimiraries} we present the required preliminaries and introduce our notation. In Section \ref{section:channel_coding} we show the optimality of nested lattice codes for channels with state information (the Gelfand-Pinsker problem). We show the optimality of nested lattice codes for source coding with side information (the Wyner-Ziv problem) in Section \ref{section:source_coding} and we finally conclude in Section \ref{section:conclusion}.

\section{Preliminaries}\label{section:prelimiraries}
\subsubsection{Channel Model}
We consider continuous memoryless channels with knowledge of channel state information at the transmitter used without feedback. We associate two sets $\mathcal{X}$ and $\mathcal{Y}$ with the channel as the channel input and output alphabets. The set of channel states is denoted by $\mathcal{S}$ and it is assumed that the channel state is distributed over $\mathcal{S}$ according to $P_S$. When the state of the channel $S$ is $s\in\mathcal{S}$, the input-output relation of the channel is characterized by a transition kernel  $W_{Y|XS}(y|x,s)$ for $x\in \mathcal{X}$ and $y\in \mathcal{Y}$. We assume that the state of the channel is known at the transmitter non-causally. The channel is specified by $(\mathcal{X},\mathcal{Y},\mathcal{S},P_S,W_{Y|XS},w)$ where $w:\mathcal{X}\times \mathcal{S}\rightarrow \mathds{R}^+$ is the cost function.\\

\subsubsection{Source Model}
The source is modeled as a discrete-time random process $X$ with each sample
taking values in a fixed set $\mathcal{X}$ called alphabet. Assume $X$ is distributed jointly with a random variable $S$ according to the measure $P_{XS}$ over $\mathcal{X}\times \mathcal{S}$ where $\mathcal{S}$ is an arbitrary set. We assume that the side information $S$ is known to the receiver non-causally. The reconstruction alphabet is denoted by $\mathcal{U}$ and the quality of reconstruction is measured by a single-letter distortion functions $d:\mathcal{X}\times \mathcal{U}\rightarrow \mathds{R}^{+}$. We denote such sources by $(\mathcal{X},\mathcal{S},\mathcal{U},P_{XS},d)$.\\

\subsubsection{Linear and Coset Codes Over $\mathds{Z}_p$} For a prime number $p$, a linear code over $\mathds{Z}_p$ of length $n$ and rate $R=\frac{k}{n}\log p$ is a collection of $p^k$ codewords of length $n$ which is closed under mod-$p$ addition (and hence mod-$p$ multiplication). In other words, linear codes over $\mathds{Z}_p$ are subspaces of $\mathds{Z}_p^n$. Any such code can be characterized by its generator matrix $G\in \mathds{Z}_p^{k\times n}$. This follow from the fact that any subgroup of an Abelian group corresponds to the image of a homomorphism into that group. The linear encoder maps a message tuple $u\in \mathds{Z}_p^k$ to the codeword $x$ where $x=uG$ and the operations are done mod-$p$. The set of all message tuples for this code is $\mathds{Z}_p^k$ and the set of all codewords is the range of the matrix $G$. i.e.
\begin{align}\label{eqn:linear_code_generator}
\mathds{C}=\left\{uG|u \in\mathds{Z}_p^k\right\}
\end{align}
A coset code over $\mathds{Z}_p$ is a shift of a linear code by a fixed vector. A coset code of length $n$ and rate $R=\frac{k}{n}\log p$ is characterized by its generator matrix $G\in \mathds{Z}_p^{k\times n}$ and it's shift vector (dither) $B\in \mathds{Z}_p^n$. The encoding rule for the corresponding coset code is given by $x=uG+B$, where $u$ is the message tuple and $x$ is the corresponding codeword. i.e.
\begin{align}\label{eqn:coset_code_generator}
\mathds{C}=\left\{uG+B|u \in\mathds{Z}_p^k\right\}
\end{align}

In a similar manner, any linear code over $\mathds{Z}_p$ of length $n$ and rate (at least) $R=\frac{n-k}{n}\log p$ is characterized by its parity check matrix $H\in \mathds{Z}_p^{k\times n}$. This follows from the fact that any subgroup of an Abelian group corresponds to the kernel of a homomorphism from that group. The set of all codewords of the code is the kernel of the matrix $H$; i.e.
\begin{align}\label{eqn:linear_code_parity}
\mathds{C}=\left\{u\in\mathds{Z}_p^n|Hu=0\right\}
\end{align}
where the operations are done mod-$p$. Note that there are at least $p^{n-k}$ codewords in this set. A coset code over $\mathds{Z}_p$ is a shift of a linear code by a fixed vector. A coset code of length $n$ and rate (at least) $R=\frac{n-k}{n}\log p$ can be characterized by its parity check matrix $H\in \mathds{Z}_p^{k\times n}$ and it's bias vector $c\in \mathds{Z}_p^k$ as follows:
\begin{align}\label{eqn:coset_code_parity}
\mathds{C}=\left\{u\in\mathds{Z}_p^n|Hu=c\right\}
\end{align}
where the operations are done mod-$p$.\\

\subsubsection{Lattice Codes and Shifted Lattice Codes} A lattice code of length $n$ is a collection of codewords in $\mathds{R}^n$ which is closed under real addition. A shifted lattice code is any translation of a lattice code by a real vector. In this paper, we use coset codes to construct (shifted) lattice codes as follows: Given a coset code $\mathds{C}$ of length $n$ over $\mathds{Z}_p$ and a \emph{step size} $\gamma$, define
\begin{align}
\Lambda(\mathds{C},\gamma,p)=\gamma (\mathds{C}-\frac{p-1}{2})
\end{align}
Then the corresponding mod-$p$ lattice code $\bar{\Lambda}(\mathds{C},\gamma,p)$ is the disjoint union of shifts of $\Lambda$ by vectors in $\gamma p\mathds{Z}^n$. i.e.
\begin{align*}
\bar{\Lambda}(\mathds{C},\gamma,p)=\displaystyle{\bigcup_{v\in p\mathds{Z}^n}}(\gamma v+\Lambda)
\end{align*}
It can be shown that this definition is equivalent to:
\begin{align*}
\bar{\Lambda}(\mathds{C},\gamma,p)=\left\{\gamma (v-\frac{p-1}{2})\left| v\in\mathds{Z}^n, v\mod p\in \mathds{C}\right.\right\}
\end{align*}
Note that $\Lambda(\mathds{C},\gamma,p)\subseteq \bar{\Lambda}(\mathds{C},\gamma,p)$ is a scaled and shifted copy of the linear code $\mathds{C}$.\\

\subsubsection{Nested Linear Codes}
A nested linear code consists of two linear codes, with the property than one of the codes (the \emph{inner linear code}) is a subset of the other code (the \emph{outer linear code}). For positive integers $k$ and $l$, let the outer and inner codes $\mathds{C}_i$ and $\mathds{C}_o$ be linear codes over $\mathds{Z}_p$ characterized by their generator matrices $G\in\mathds{Z}_p^{l\times n}$ and $G^\prime\in\mathds{Z}_p^{(k+l)\times n}$ and their shift vectors $B\in \mathds{Z}_p^n$ and $B^\prime\in \mathds{Z}_p^n$ respectively. Furthermore, assume %$G^\prime=\left[\!\!\begin{array}{c}G\\\Delta G\end{array}\!\!\right]$ for some $\Delta G\in \mathds{Z}_p^{k\times n}$ and $B^\prime=B$.
\begin{align*}
G^\prime=\left[\!\!\begin{array}{c}G\\\Delta G\end{array}\!\!\right],\quad B^\prime=B
\end{align*}
For some $\Delta G\in \mathds{Z}_p^{k\times n}$.
In this case,
\begin{align}
&\label{eqn:Linear_codes1}\mathds{C}_o=\left\{aG+m\Delta G+B|a\in \mathds{Z}_p^l,m\in\mathds{Z}_p^k\right\},\\
&\label{eqn:Linear_codes2}\mathds{C}_i=\left\{aG+B|a\in \mathds{Z}_p^l\right\}
\end{align}
It is clear that the inner code is contained in the outer code. Furthermore, the inner code induces a partition of the outer code through its shifts. For $m\in \mathds{Z}_p^k$ define the $m$th \emph{bin} of $\mathds{C}_i$ in $\mathds{C}_o$ as %$\mathds{B}_m=\left\{aG+m\Delta G+B|a\in \mathds{Z}_p^l\right\}$.
\begin{align*}
&\mathds{B}_m=\left\{aG+m\Delta G+B|a\in \mathds{Z}_p^l\right\}
\end{align*}

Similarly, Nested linear codes can be characterized by the parity check representation of linear codes. For positive integers $k$ and $l$, let the outer and inner codes $\mathds{C}_o$ and $\mathds{C}_i$ be linear codes over $\mathds{Z}_p$ characterized by their parity check matrices $H\in\mathds{Z}_p^{l\times n}$ and $H^\prime\in\mathds{Z}_p^{(k+l)\times n}$ and their bias vectors $c\in \mathds{Z}_p^l$ and $c^\prime\in \mathds{Z}_p^{k+l}$ respectively. Furthermore assume:
\begin{align*}
H^\prime=\left[\begin{array}{c}H\\ \Delta H\end{array}\right], c^\prime=\left[\begin{array}{c}c\\ \Delta c\end{array}\right]
\end{align*}
For some $\Delta H\in \mathds{Z}_p^{k\times n}$ and $\Delta c\in \mathds{Z}_p^{k}$. In this case,
\begin{align}
&\label{eqn:Linear_codes3}\mathds{C}_o=\left\{u\in\mathds{Z}_p^n|Hu=c\right\},\\
&\label{eqn:Linear_codes4}\mathds{C}_i=\left\{u\in\mathds{Z}_p^n|Hu=c,\Delta H u=\Delta c\right\}
\end{align}
For $m\in \mathds{Z}_p^k$ define the $m$th bin of $\mathds{C}_i$ in $\mathds{C}_o$ as
\begin{align*}
&\mathds{B}_m=\left\{u\in\mathds{Z}_p^n|Hu=c,\Delta H u=m\right\}
\end{align*}
The outer code is the disjoint union of all the bins and each bin index $m\in \mathds{Z}_p^k$ is considered as a message. We denote a nested linear code by a pair $(\mathds{C}_i,\mathds{C}_o)$.\\
\subsubsection{Nested Lattice Codes} Given a nested linear code $(\mathds{C}_i,\mathds{C}_o)$ over $\mathds{Z}_p$ and a step size $\gamma$, define
\begin{align}
&\label{eqn:Lattice:1copy1}\Lambda_i(\mathds{C}_i,\gamma,p)=\gamma(\mathds{C}_i-\frac{p-1}{2}),\\
&\label{eqn:Lattice:1copy2}\Lambda_o(\mathds{C}_o,\gamma,p)=\gamma(\mathds{C}_o-\frac{p-1}{2})
\end{align}
Then the corresponding nested lattice code consists of an inner lattice code and an outer lattice code
\begin{align}\label{eqn:InnerLattice}
\bar{\Lambda}_i(\mathds{C}_i,\gamma,p)=\cup_{v\in p\mathds{Z}^n}(\gamma v+\Lambda_i)\\
%\end{align}
%
%\begin{align}\label{eqn:OuterLattice}
\bar{\Lambda}_o(\mathds{C}_o,\gamma,p)=\cup_{v\in p\mathds{Z}^n}(\gamma v+\Lambda_o)
\end{align}
In this case as well, the inner lattice code induces a partition of the outer lattice code. For $m\in \mathds{Z}_p^k$, define
\begin{align}
\mathfrak{B}_m=\gamma (\mathds{B}_m-\frac{p-1}{2})
\end{align}
where $\mathds{B}_m$ is the $m$th bin of $\mathds{C}_i$ in $\mathds{C}_o$. The $m$th bin of the inner lattice code in the outer lattice code is defined by:
\begin{align*}
\bar{\mathfrak{B}}_m=\cup_{v\in p\mathds{Z}^n}(\gamma v+\mathfrak{B}_m)
\end{align*}
The set of messages consists of the set of all bins of $\bar{\Lambda}_i$ in $\bar{\Lambda}_o$. We denote a nested lattice code by a pair $(\bar{\Lambda}_i,\bar{\Lambda}_o)$.\\

%\subsubsection{Lattice Shaping}
%Let $S$ be a bounded subset of $\mathds{R}^n$. A nested linear code with \emph{shaping region} $S$ is a pair $(S\cap\bar{\Lambda}_i,S\cap\bar{\Lambda}_o)$ where $(\bar{\Lambda}_i,\bar{\Lambda}_o)$ is a nested lattice code. The intersection $S\cap \bar{\Lambda}_i$ induces a partition of $S\cap \bar{\Lambda}_i$ and the set of messages consists of the set of all bins of the inner lattice in the outer lattice intersected with the shaping region.\\
\subsubsection{Achievability for Channel Coding and the Capacity-Cost Function}
A transmission system with parameters $(n,M,\Gamma,\tau)$ for reliable communication over a given channel $(\mathcal{X},\mathcal{Y},\mathcal{S},P_S,W_{Y|XS},w)$ with cost function $w:\mathcal{X}\times \mathcal{S}\rightarrow \mathds{R}^+$ consists of an encoding mapping and a decoding mapping
\begin{align*}
&e:\mathcal{S}^n\times \{1,2,\ldots,M\}\rightarrow \mathcal{X}^n\\
&f:\mathcal{Y}^n\rightarrow\{1,2,\ldots,M\}
\end{align*}
such that for all $m=1,2,\ldots,M$, if $s=(s_1,\cdots,s_n)$ and $x=e(s,m)=(x_1,\cdots,x_n)$, then %$\frac{1}{n}\displaystyle\sum_{i=1}^n w(x_i)<\Gamma$ and
\begin{align*}
\frac{1}{n}\displaystyle\sum_{i=1}^n w(x_i,s_i)<\Gamma
\end{align*}
%where $x=(x_1,\cdots,x_n)=e(m)$ and
%\begin{align*}
%\sum_{m=1}^{M}\frac{1}{M}\int_{s\in\mathcal{S}^n}f_S^n(s)\int_{\substack{y\in\mathcal{Y}^n\\f(y)\ne m}}f_{Y|XS}^n\left(y|e(m),s \right) ds dy\le \tau
%\end{align*}
and
\begin{align*}
\mathds{E}_{P_S}\left\{\sum_{m=1}^{M}\frac{1}{M}Pr\left(f(Y^n)\ne m|X^n=e(S^n,m)\right)\right\}\le \tau
%\int_{s\in\mathcal{S}^n}f_S^n(s)\int_{\substack{y\in\mathcal{Y}^n\\f(y)\ne m}}f_{Y|XS}^n\left(y|e(m),s \right) ds dy\le \tau
\end{align*}
Given a channel $(\mathcal{X},\mathcal{Y},\mathcal{S},P_S,W_{Y|XS},w)$, a pair of non negative numbers $(R,W)$ is said to be achievable if for all $\epsilon>0$ and for all sufficiently large $n$, there exists a transmission system for reliable communication with parameters $(n,M,\Gamma,\tau)$ such that
\begin{align*}
\frac{1}{n}\log M \ge R-\epsilon,\qquad\Gamma\le W+\epsilon,\qquad\tau\le \epsilon
\end{align*}
The optimal capacity cost function $C(W)$ is given by the supremum of C such that $(C,W)$ is achievable.\\

\subsubsection{Achievability for Source Coding and the Rate-Distortion Function}
A transmission system with parameters $(n,\Theta,\Delta,\tau)$ for compressing a given source $(\mathcal{X},\mathcal{S},\mathcal{U},P_{XS},d)$ consists of an encoding mapping and a decoding mapping %$e:\mathcal{X}^n\rightarrow \{1,2,\cdots,\Theta\}$ and a decoding mapping $g:\mathcal{S}^n\times \{1,2,\cdots,\Theta\}\rightarrow \mathcal{U}^n$
\begin{align*}
&e:\mathcal{X}^n\rightarrow \{1,2,\cdots,\Theta\},\\
&g:\mathcal{S}^n\times \{1,2,\cdots,\Theta\}\rightarrow \mathcal{U}^n
\end{align*}
such that the following condition is met:%$P\left(d(X^n,g(e(X^n)))>\Delta\right)\le \tau$ %
\begin{align*}
P\left(d(X^n,g(e(X^n)))>\Delta\right)\le \tau
\end{align*}
where $X^n$ is the random vector of length $n$ generated by the source. In this transmission system, $n$ denotes the block length, $\log \Theta$ denotes the number of channel uses, $\Delta$ denotes the distortion level and $\tau$ denotes the probability of exceeding the distortion level $\Delta$.\\
Given a source, a pair of non-negative real numbers $(R,D)$ is said to be achievable if there exists for every $\epsilon>0$, and for all sufficiently large numbers $n$ a transmission system with parameters $(n,\Theta,\Delta,\tau)$ for compressing the source such that
\begin{align*}
\frac{1}{n}\log \Theta\le R+\epsilon, \qquad \Delta\le D+\epsilon,\qquad \tau\le \epsilon
\end{align*}
The optimal rate distortion function $R^*(D)$ of the source is given by the infimum of the rates $R$ such that $(R,D)$ is achievable.\\

\subsubsection{Typicality}
We use the notion of weak* typicality with Prokhorov metric introduced in \cite{mitran_polish}. Let $M(\mathds{R}^d)$ be the set of probability measures on $\mathds{R}^d$. For a subset $A$ of $\mathds{R}^d$ define its $\epsilon$-neighborhood by %$A^{\epsilon}=\{x\in\mathds{R}^d|\exists y\in A \mbox{ such that }\|x-y\|<\epsilon\}$
\begin{align*}
A^{\epsilon}=\{x\in\mathds{R}^d|\exists y\in A \mbox{ such that }\|x-y\|<\epsilon\}
\end{align*}
where $\|\cdot\|$ denotes the Euclidean norm in $\mathds{R}^d$. The Prokhorov distance between two probability measures $P_1,P_2\in M(\mathds{R}^d)$ is defined as follows:
\begin{align*}
\pi_d(P_1,P_2)=&\inf \{\epsilon>0|P_1(A)<P_2(A^{\epsilon})+\epsilon\mbox{ and }\\
&P_2(A)<P_1(A^{\epsilon})+\epsilon \quad \forall\mbox{ Borel set $A$ in }\mathds{R}^d\}
\end{align*}
Consider two random variables $X$ and $Y$ with joint distribution $P_{XY}(\cdot,\cdot)$ over $\mathcal{X}\times\mathcal{Y}\subseteq \mathds{R}^2$. Let $n$ be an integer and $\epsilon$ be a positive real number. For the sequence pair $(x,y)$ belonging to $\mathcal{X}^n\times \mathcal{Y}^n$ where $x=(x_1,\cdots,x_n)$ and $y=(y_1,\cdots,y_n)$ define the empirical joint distribution by
\begin{align*}
\bar{P}_{xy}(A,B)=\frac{1}{n}\sum_{i=1}^{n}\mathds{1}_{\{x_i\in A, y_i\in B\}}
\end{align*}
for Borel sets $A$ and $B$. Let $\bar{P}_{x}$ and $\bar{P}_y$ be the corresponding marginal probability measures. It is said that the sequence $x$ is weakly* $\epsilon$-typical with respect to $P_X$ if %$\pi_1(P_X,P_x)<\epsilon$.
\begin{align*}
\pi_1(\bar{P}_x ,P_X)<\epsilon
\end{align*}
We denote the set of all weakly* $\epsilon$-typical sequences of length $n$ by $A_{\epsilon}^n(X)$. Similarly, $x$ and $y$ are said to be jointly weakly* $\epsilon$-typical with respect to $P_{XY}$ if %$\pi_2(P_{xy},P_{XY})<\epsilon$.
\begin{align*}
\pi_2(\bar{P}_{xy},P_{XY})<\epsilon
\end{align*}
We denote the set of all weakly* $\epsilon$-typical sequence pairs of length $n$ by $A_{\epsilon}^n(XY)$.\\
Given a sequence $x\in A_{\epsilon}^n$, the set of conditionally $\epsilon$-typical sequences $A_\epsilon^n(Y|x)$ is defined as
\begin{align*}
A_\epsilon^n(Y|x)=\left\{y\in \mathcal{Y}^n\left| (x,y)\in
A_\epsilon^n(X,Y)\right.\right\}
\end{align*}

\subsubsection{Notation} In our notation, $O(\epsilon)$ is any function of $\epsilon$ such that $\lim_{\epsilon\rightarrow 0}O(\epsilon)=0$ and for a set $G$, $|G|$ denotes the cardinality (size) of $G$.\\

\section{Channel Coding}\label{section:channel_coding}
We show the achievability of the rate $R=I(U;Y)-I(U;S)$ for the Gelfand-Pinsker channel using nested lattice code for $U$.
%\begin{theorem}
%For the channel $(\mathcal{X},\mathcal{Y},\mathcal{S},f_S,f_{Y|XS})$, Let $X$, $Y$, $S$ be random variables corresponding to channel input, channel output and channel state and let $w:\mathcal{X}\rightarrow \mathds{R}^+$ be a continuous cost function. Let $\mathcal{U}$ be an arbitrary set and let $U$ be a random variable taking values from the set $\mathcal{U}$ according to the probability density function $f_U$. Let $f:\mathcal{U}\times\mathcal{S}\rightarrow\mathcal{X}$ be a function and let $X$, $Y$, $S$, $U$ be jointly distributed such that $U$ is distributed over $\mathcal{U}$ according to the probability density function $f_U$, $S$ is distributed over $\mathcal{S}$ according to $f_S$, $X$ is a function of $U$ and $S$ as $X=f(U,S)$ and $Y$ is conditionally distributed over $\mathcal{Y}$ according to $f_{Y|XSU}=f_{Y|XS}$. Then if $\mathds{E}\{w(X)\}\le W$ the pair $(R,W)$ is achievable using nested lattice codes where $R=I(U;Y)-I(U;S)$.
%\end{theorem}
\begin{theorem}\label{theorem:channel}
For the channel $(\mathcal{X},\mathcal{Y},\mathcal{S},P_S,W_{Y|XS},w)$, let $w:\mathcal{X}\rightarrow \mathds{R}^+$ be a continuous cost function. Let $\mathcal{U}$ be an arbitrary set and let $SUXY$ be distributed over $\mathcal{S}\times\mathcal{U}\times\mathcal{X}\times\mathcal{Y}$ according to $P_SP_{U|S}W_{X|US}W_{Y|SX}$ where the conditional distribution $P_{U|S}$ and the transition kernel $W_{X|US}$ are such that $\mathds{E}\{w(X)\}\le W$. Then the pair $(R,W)$ is achievable using nested lattice codes over $U$ where $R=I(U;Y)-I(U;S)$.
\end{theorem}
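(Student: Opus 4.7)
The plan is to follow the classical Gelfand--Pinsker random-binning argument but replace the i.i.d.\ random codebook with a nested lattice codebook $(\bar\Lambda_i,\bar\Lambda_o)$ constructed from a nested linear code $(\mathds{C}_i,\mathds{C}_o)$ over $\mathds{Z}_p$ via the construction given earlier, with the outer code rate chosen slightly below $I(U;Y)$ and the inner code rate chosen slightly above $I(U;S)$, so that the number of bins is approximately $2^{nR}$ with $R=I(U;Y)-I(U;S)$. As a preparatory step I would fine-quantize $U$ to values on the grid $\gamma\bigl(\mathds{Z}-\tfrac{p-1}{2}\bigr)$: for any $\delta>0$, by choosing $\gamma$ sufficiently small and $\gamma p$ sufficiently large one can produce a discretized $\tilde U$ whose joint law with $(S,Y,X)$ is within Prokhorov distance $\delta$ of $P_{SUYX}$, so the relevant mutual-information quantities are preserved up to $O(\delta)$. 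The ensemble is randomized by drawing the generator matrix $G'$ and dither $B$ uniformly over $\mathds{Z}_p^{(k+l)\times n}$ and $\mathds{Z}_p^n$ respectively.

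The encoder, upon receiving state $s\in\mathcal{S}^n$ and message $m$, searches the $m$th bin $\bar{\mathfrak{B}}_m$ of the outer lattice code for a sequence $u$ with $(s,u)\in A_\epsilon^n(S,U)$; if such a $u$ is found, it generates the channel input $x\in\mathcal{X}^n$ by applying the kernel $W_{X|US}$ component-wise, and otherwise it declares a failure. The decoder, given $y\in\mathcal{Y}^n$, looks for the unique bin index $m$ such that $\bar{\mathfrak{B}}_m$ contains some $u$ with $(u,y)\in A_\epsilon^n(U,Y)$, and otherwise declares a failure.

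Two lattice-coding lemmas drive the analysis. First, a covering lemma: once the rate of each bin exceeds $I(U;S)+O(\delta)$, the probability over the random ensemble and over $S^n$ that a given bin $\bar{\mathfrak{B}}_m$ contains no $u$ jointly weakly* typical with $S^n$ vanishes as $n\to\infty$. The key structural fact is that, for uniform $(G,\Delta G,B)$, the map $a\mapsto aG+m\Delta G+B$ produces a pairwise-independent family of uniform vectors on $\mathds{Z}_p^n$ (using primality of $p$), so a second-moment estimate on the number of bin-codewords jointly typical with $s$ proceeds in direct analogy with the i.i.d.\ case. Second, a packing lemma: once the outer code rate is below $I(U;Y)-O(\delta)$, the expected number of codewords $u'\in\bar\Lambda_o$ distinct from the transmitted one that are jointly typical with $Y^n$ vanishes. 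Combined with an application of a weak* Markov/typicality lemma for the kernel $W_{X|US}$ together with continuity of $w$, this ensures that $\tfrac{1}{n}\sum_i w(x_i)$ concentrates near $\mathds{E}\{w(X)\}\le W$, and that the average decoding error probability vanishes.

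The main obstacle is the interaction between the discrete lattice structure and the continuous Polish alphabets: joint typicality is defined here via the Prokhorov metric on empirical measures rather than entropy-typicality, so the covering and packing arguments must be executed at the level of weak* convergence. I would handle this by first reducing to the finely quantized problem described above, where standard exponential bounds on probabilities of weak*-typical sets from \cite{mitran_polish} apply, and then taking a joint limit $\epsilon\to 0$, $\delta\to 0$, $\gamma\to 0$, $p\to\infty$ in the right order so that the encoding-failure, decoding-error, and cost-violation events simultaneously vanish, while the rate $R$ approaches $I(U;Y)-I(U;S)$ under the original distribution. A secondary technical nuisance is verifying that the quantization and the passage from $\mathds{Z}_p$ to $\bar\Lambda_o$ (via shifts by $\gamma p\mathds{Z}^n$) do not spoil the mutual-information computations; I would show this by noting that for $\gamma p$ large the typical set of $U^n$ is essentially contained in a single copy of $\Lambda_o$, so the $\bigcup_{v\in p\mathds{Z}^n}$ construction contributes negligibly to the relevant probabilities.
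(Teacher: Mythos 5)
Your overall architecture mirrors the paper's: quantize $U$ to a $\gamma(\mathds{Z}_p-\tfrac{p-1}{2})$ grid, take the random nested-lattice ensemble with uniform $G$, $\Delta G$, $B$, do weak*-typicality covering in the bin at the encoder and weak*-typicality packing at the decoder, use pairwise independence for a second-moment covering bound and a union bound for packing, reduce to the quantized case via the martingale-convergence result for divergences, and then handle the unbounded cost by clipping. In broad outline this matches the paper.

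However, there is a genuine error in the rate thresholds you assign to the covering and packing lemmas, and it is not cosmetic. You state that covering succeeds once the per-bin rate exceeds $I(U;S)$ and that packing requires the outer-code rate to be below $I(U;Y)$. Those thresholds are correct for an i.i.d.\ random codebook whose codewords are drawn from $P_U$. In a random (mod-$p$) lattice ensemble the codewords are \emph{uniform} on $\gamma(\mathds{Z}_p-\tfrac{p-1}{2})^n$, not distributed according to $P_{\hat U}$. Consequently the probability that a codeword lands in $A_\epsilon^n(\hat U|s)$ decays like $2^{-nD(P_{\hat U S}\|P_Z P_S)}$, where $Z$ is uniform on the grid, and not like $2^{-nI(\hat U;S)}$. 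Since
\begin{align*}
D(P_{\hat U S}\|P_Z P_S)=I(\hat U;S)+D(P_{\hat U}\|P_Z),
\end{align*}
there is an additive penalty $D(P_{\hat U}\|P_Z)>0$ whenever $P_{\hat U}$ is not uniform on the grid. If you set the inner-code rate only slightly above $I(\hat U;S)$ as you propose, the expected number of bin codewords in $A_\epsilon^n(\hat U|s)$ is on the order of $2^{-nD(P_{\hat U}\|P_Z)}\rightarrow 0$, so the bin contains no typical codeword with high probability and the encoder fails. The correct thresholds are $\tfrac{l}{n}\log p>D(P_{\hat U S}\|P_Z P_S)$ for covering and $\tfrac{k+l}{n}\log p<D(P_{\hat U Y}\|P_Z P_Y)$ for packing, exactly as the paper derives via its Lemmas~\ref{lemma:mitran1} and~\ref{lemma:mitran2}. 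The penalty $D(P_{\hat U}\|P_Z)$ appears in both and cancels in $\tfrac{k}{n}\log p$, so the achievable rate is still $I(\hat U;Y)-I(\hat U;S)$, but one must run both the inner and outer rates at the higher, divergence-based levels to make both the covering and packing succeed simultaneously. If you fix the outer rate near $I(U;Y)$ as you propose, the covering constraint forces $\tfrac{k}{n}\log p \lesssim I(\hat U;Y)-D(P_{\hat U S}\|P_Z P_S) = I(\hat U;Y)-I(\hat U;S)-D(P_{\hat U}\|P_Z)$, strictly less than the target rate. This is the one substantive defect in an otherwise faithful reconstruction of the argument.
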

\subsection{Discrete $U$ and Bounded Continuous Cost Function}\label{section:discrete_U}
In this section we prove the theorem for the case when $U=\hat{U}$ takes values from the discrete set $\gamma(\mathds{Z}_p-\frac{p-1}{2})$ where $p$ is a prime and $\gamma$ is a positive number. We use a random coding argument over the ensemble of mod-$p$ lattice codes to prove the achievability. Let $\mathds{C}_o$ and $\mathds{C}_i$ be defined as (\ref{eqn:Linear_codes1}) and (\ref{eqn:Linear_codes2}) where $G$ is a random matrix in $\mathds{Z}_p^{l\times n}$, $\Delta G$ is a random matrix in $\mathds{Z}_p^{k\times n}$ and $B$ is a random vector in $\mathds{Z}_p^{n}$. Define $\bar{\Lambda}_i(\mathds{C}_i,\gamma,p)$ and $\bar{\Lambda}_o(\mathds{C}_o,\gamma,p)$ accordingly. The ensemble of nested lattice codes consists of all lattices of the form (\ref{eqn:Lattice:1copy1}) and (\ref{eqn:Lattice:1copy2}). The set of messages consists of all bins $\mathfrak{B}_m$ indexed by $m\in \mathds{Z}_p^k$.\\
The encoder observes the massage $m\in \mathds{Z}_p^k$ and the channel state $s\in \mathcal{S}^n$ and looks for a vector $u$ in the $m$th bin $\mathfrak{B}_m$ which is jointly weakly* typical with $s$ and encodes the massage $m$ to $x$ according to $W_{X|SU}$. The encoder declares error if it does not find such a vector.\\
After receiving $y\in\mathcal{Y}^n$, the decoder decodes it to $m\in\mathds{Z}_p^k$ if $m$ is the unique tuple such that the $m$th bin $\mathfrak{B}_m$ contains a sequence jointly typical with $y$. Otherwise it declares error.\\%We consider nested lattice codes with cubical shaping where $S=[\frac{-(p-1)}{2},\frac{(p-1)}{2}]^n$ is the shaping region.

\subsubsection{Encoding Error}
We begin with some definitions and lemmas. Let
\begin{align}\label{eqn:S_prime}
S^\prime=[\frac{-\gamma p}{2},\frac{\gamma p}{2}]^n\cap \gamma\mathds{Z}^n
\end{align}
For $a\in\mathds{Z}_p^k$, $m\in\mathds{Z}_p^l$, define
\begin{align*}
g(a,m)=\gamma\left((aG+m\Delta G+B)-\frac{(p-1)}{2}\right)
\end{align*}
$g(a,m)$ has the following properties:

\begin{lemma}
For $a\in\mathds{Z}_p^l$ and $m\in\mathds{Z}_p^k$, $g(a,m)$ is uniformly distributed over $S^\prime$. i.e. For $u\in S^\prime$,
\begin{align*}
P(g(a,m)=u)=\frac{1}{p^n}
\end{align*}
\end{lemma}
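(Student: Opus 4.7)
The plan is to exploit the uniformity of the random dither vector $B$ over $\mathds{Z}_p^n$. The key algebraic observation is that $g(a,m)$ is obtained from $B$ by first adding a fixed (given $G$ and $\Delta G$) element of $\mathds{Z}_p^n$, then applying the affine map $v\mapsto \gamma(v-(p-1)/2)$. Both operations are bijections on the relevant spaces, so they preserve the uniform distribution.

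First I would fix $a\in\mathds{Z}_p^l$ and $m\in\mathds{Z}_p^k$ and condition on an arbitrary realization $G=g_0$, $\Delta G=\Delta g_0$. Under this conditioning, $aG+m\Delta G+B = ag_0+m\Delta g_0 + B$, which is just a deterministic translation of $B$ by the fixed vector $ag_0+m\Delta g_0\in\mathds{Z}_p^n$. Because translation by any fixed element of the group $\mathds{Z}_p^n$ is a bijection of $\mathds{Z}_p^n$, and because $B$ is uniform over $\mathds{Z}_p^n$ by assumption (and independent of $G,\Delta G$), the translated vector is also uniform on $\mathds{Z}_p^n$. Averaging over $G,\Delta G$ preserves this, so $aG+m\Delta G+B$ is uniform on $\mathds{Z}_p^n$, i.e.\ it takes each of its $p^n$ values with probability $1/p^n$.

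Next I would verify that the map $\varphi:v\mapsto \gamma(v-(p-1)/2)$ is a bijection from $\{0,1,\ldots,p-1\}^n$ (the canonical representatives of $\mathds{Z}_p^n$) onto $S'$. Since $p$ is prime and hence odd (or trivially $p=2$, to be handled separately if needed), $(p-1)/2$ is an integer and the image of $\{0,\ldots,p-1\}$ under $v\mapsto \gamma(v-(p-1)/2)$ is exactly $\gamma\{-(p-1)/2,\ldots,(p-1)/2\}$, which coincides with $[-\gamma p/2,\gamma p/2]\cap \gamma \mathds{Z}$. Taking $n$-fold product gives $S'$, and $|S'|=p^n$. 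Applying the bijection $\varphi$ componentwise to a uniform distribution on $\mathds{Z}_p^n$ yields a uniform distribution on $S'$, which is exactly the claim $P(g(a,m)=u)=1/p^n$ for every $u\in S'$.

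There is no real obstacle here; the lemma is essentially a bookkeeping statement that the random dither $B$ washes out the contribution of $G$ and $\Delta G$ and that the deterministic rescaling $\varphi$ does not distort the distribution. The only mildly delicate point worth flagging is that the argument hinges on $B$ being drawn uniformly and independently of $G,\Delta G$, a property of the code ensemble that should be stated explicitly when the ensemble is introduced.
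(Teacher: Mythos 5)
Your proof is correct and follows the same route as the paper's: independence and uniformity of the dither $B$ make $aG+m\Delta G+B$ uniform over $\mathds{Z}_p^n$, and the affine map $v\mapsto\gamma\bigl(v-\frac{p-1}{2}\bigr)$ is a bijection onto $S'$, so uniformity is preserved. Your version merely spells out the conditioning on $G,\Delta G$ and the bijection check (including the odd-$p$ caveat) that the paper leaves implicit.
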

\begin{proof}
Note that $B$ is independent of $G$ and $\Delta G$ and therefore $aG+m\Delta G+B$ is a uniform variable over $\mathds{Z}_p^n$. The lemma follows by noting that
\begin{align*}
S^\prime=\gamma\left(\mathds{Z}_p^n-\frac{(p-1)}{2}\right)
\end{align*}
\end{proof}

\begin{lemma}
For $a,\tilde{a}\in\mathds{Z}_p^l$ and $m\in\mathds{Z}_p^k$ if $a\ne \tilde{a}$ then $g(a,m)$ and $g(\tilde{a},m)$ are independent. i.e. For $u\in S^\prime$ and $\tilde{u}\in S^\prime$,
\begin{align*}
P(g(a,m)=u,g(\tilde{a},m)=\tilde{u})=\frac{1}{p^{2n}}
\end{align*}
\end{lemma}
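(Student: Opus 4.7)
The plan is to reduce the joint distribution of $(g(a,m), g(\tilde a,m))$ to a joint distribution on $\mathds{Z}_p^n\times\mathds{Z}_p^n$ by stripping off the deterministic affine map $v\mapsto \gamma(v-(p-1)/2)$, then show that the pair
\[
\bigl(aG+m\Delta G+B,\;\tilde aG+m\Delta G+B\bigr)
\]
is uniformly distributed on $\mathds{Z}_p^n\times\mathds{Z}_p^n$ whenever $a\neq\tilde a$.

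First I would change variables to $U=aG+m\Delta G+B$ and $\tilde U=\tilde aG+m\Delta G+B$, and note the algebraic identity $\tilde U-U=(\tilde a-a)G$. Since $B$ is uniform on $\mathds{Z}_p^n$ and independent of $(G,\Delta G)$, conditioning on any realization of $(G,\Delta G)$ leaves $U$ uniform on $\mathds{Z}_p^n$; hence $U$ is uniform on $\mathds{Z}_p^n$ and independent of $(G,\Delta G)$, and in particular independent of $(\tilde a-a)G$. Therefore the joint distribution of $(U,\tilde U)$ is the same as that of $(U, U+(\tilde a-a)G)$ with the two summands independent.

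The substantive step is to show that $(\tilde a-a)G$ is itself uniformly distributed on $\mathds{Z}_p^n$. Write $\tilde a-a=(c_1,\ldots,c_l)\in\mathds{Z}_p^l\setminus\{0\}$ and let $G_1,\ldots,G_l$ denote the rows of $G$, which are i.i.d.\ uniform on $\mathds{Z}_p^n$. Pick some index $j$ with $c_j\neq 0$. Because $p$ is prime, $c_j$ is a unit in $\mathds{Z}_p$, so for every fixed realization of the other rows $\{G_i\}_{i\neq j}$, the map $G_j\mapsto c_jG_j+\sum_{i\neq j}c_iG_i$ is a bijection of $\mathds{Z}_p^n$; hence $(\tilde a-a)G$ is conditionally uniform on $\mathds{Z}_p^n$ given $\{G_i\}_{i\neq j}$, and therefore unconditionally uniform. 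Combining this with the independence established above, $(U,\tilde U)$ is uniform on $\mathds{Z}_p^n\times\mathds{Z}_p^n$, so each outcome has probability $p^{-2n}$; pushing through the affine bijection $v\mapsto\gamma(v-(p-1)/2)$ from $\mathds{Z}_p^n$ onto $S'$ gives the claim.

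The only real obstacle is the primality step in the previous paragraph: one needs that every nonzero entry of $\tilde a-a$ is invertible in $\mathds{Z}_p$ so that a single row of $G$ can be used to ``absorb'' the other rows into a uniform output. This is why the hypothesis that $p$ is prime is essential, and it is precisely where the proof of the previous lemma (which only used $B$) does not suffice. Everything else is routine independence bookkeeping.
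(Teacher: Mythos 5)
Your proof is correct and follows essentially the same route as the paper's: both rewrite the joint event in terms of $aG+m\Delta G+B$ and the difference $(\tilde a-a)G$, factor using the uniformity and independence of $B$, and invoke the uniformity of $(\tilde a-a)G$ for $\tilde a\ne a$. The only difference is that you explicitly justify that last uniformity step (a nonzero entry of $\tilde a-a$ is a unit since $p$ is prime, so one row of $G$ can absorb the rest), a detail the paper compresses into the remark that $G$ is uniform and $\tilde a-a\ne 0$.
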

\begin{proof}
It suffices to show that $aG+m\Delta G+B$ and $\tilde{a}G+m\Delta G+B$ are uniform over $\mathds{Z}_p^n$ and independent. Note that for $u,\tilde{u}\in\mathds{Z}_p^n$,
\begin{align*}
&P\left(aG+m\Delta G+B=u,\tilde{a}G+m\Delta G+B=\tilde{u}\right)\\
&\qquad =P\left(aG+m\Delta G+B=u,(\tilde{a}-a)G=\tilde{u}-u\right)\\
&\qquad \stackrel{(a)}{=}P\left(aG+m\Delta G+B=u\right)\times P\left((\tilde{a}-a)G=\tilde{u}-u\right)\\
&\qquad \stackrel{(b)}{=}\frac{1}{p^{2n}}
\end{align*}
where $(a)$ follows since the $B$ is uniform over $\mathds{Z}_p^n$ and independent of $G$ and $(b)$ follows since $B$ and $G$ are uniform and $\tilde{a}-a\ne 0$
\end{proof}

\begin{lemma}\label{lemma:m_tildem_indep}
For $a,\tilde{a}\in\mathds{Z}_p^l$ and $m,\tilde{m}\in\mathds{Z}_p^k$ if $m\ne \tilde{m}$ then $g(a,m)$ and $g(\tilde{a},\tilde{m})$ are independent. i.e. For $u\in S^\prime$ and $\tilde{u}\in S^\prime$,
\begin{align*}
P(g(a,m)=u,g(\tilde{a},\tilde{m})=\tilde{u})=\frac{1}{p^{2n}}
\end{align*}
\end{lemma}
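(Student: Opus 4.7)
The plan is to mirror the structure of the two preceding lemmas: reduce the claim on $g(a,m)$ and $g(\tilde{a},\tilde{m})$ to an independence statement for the underlying mod-$p$ vectors
\begin{align*}
V = aG+m\Delta G+B, \qquad \tilde V = \tilde{a}G+\tilde{m}\Delta G+B,
\end{align*}
since the map $v\mapsto \gamma(v-(p-1)/2)$ is a bijection from $\mathds{Z}_p^n$ onto $S'$. Thus it suffices to prove
\begin{align*}
P(V=u,\tilde V=\tilde u)=\tfrac{1}{p^{2n}} \qquad \forall\, u,\tilde u\in\mathds{Z}_p^n.
\end{align*}

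First I would change variables by replacing the second equation by its difference with the first, so that the joint event becomes
\begin{align*}
\{V=u\}\cap\{(\tilde{a}-a)G+(\tilde{m}-m)\Delta G=\tilde u-u\}.
\end{align*}
Since $B$ is independent of $(G,\Delta G)$ and uniform on $\mathds{Z}_p^n$, conditioning on $(G,\Delta G)$ shows that $P(V=u\mid G,\Delta G)=1/p^n$ on the nose, independently of the second event, which no longer involves $B$. Hence the joint probability factors as
\begin{align*}
P(V=u,\tilde V=\tilde u)=\tfrac{1}{p^n}\cdot P\bigl((\tilde{a}-a)G+(\tilde{m}-m)\Delta G=\tilde u-u\bigr).
\end{align*}

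The main (and only nontrivial) step is to show that the remaining probability equals $1/p^n$. Here I use the hypothesis $m\ne\tilde{m}$: letting $d=\tilde{m}-m\in\mathds{Z}_p^k\setminus\{0\}$, pick an index $j$ with $d_j\ne 0$ so that $d_j$ is invertible mod $p$. Conditioning on $G$ and on all rows of $\Delta G$ except the $j$-th, the $j$-th row of $\Delta G$ is still uniform on $\mathds{Z}_p^n$ and independent of what we have conditioned on; and the equation
\begin{align*}
d\,\Delta G = (\tilde u-u)-(\tilde{a}-a)G
\end{align*}
has a unique solution in that $j$-th row. Therefore the conditional probability is $1/p^n$, and averaging gives $1/p^n$ unconditionally.

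The potential obstacle is purely bookkeeping: one must be careful that the two uniformity-and-independence arguments (one for $B$, one for the $j$-th row of $\Delta G$) are applied in the correct order, because the second event depends on $(G,\Delta G)$ while the first depends on $B$ as well. Conditioning on $(G,\Delta G)$ first cleanly separates these, and no further structure is needed. Combining the two factors of $1/p^n$ yields the claim, which is the continuous analogue of the discrete statement and completes the proof in the same style as the previous two lemmas.
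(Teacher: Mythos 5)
Your proof is correct and follows essentially the route the paper intends: the paper omits this proof as ``similar to the previous lemma,'' and your argument is exactly that adaptation --- factor out $B$ by conditioning on $(G,\Delta G)$, then use the uniformity of $\Delta G$ against the nonzero difference $\tilde{m}-m$. You also correctly handle the one subtlety the adaptation requires (the case $a=\tilde{a}$, where the $(\tilde{a}-a)G$ term gives nothing and the invertible coordinate $d_j$ of $\tilde m - m$ must carry the argument), so nothing is missing.
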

\begin{proof}
The proof is similar to the proof of the previous lemma and is omitted.
\end{proof}

For a message $m\in\mathds{Z}_p^k$ and state $s\in \mathcal{S}^n$, the encoder declares error if there is no sequence in $\mathfrak{B}_m$ jointly typical with $s$. Define
\begin{align*}
\theta(s)&=\sum_{u\in \mathfrak{B}_m}\mathds{1}_{\{u\in A_{\epsilon}^n(\hat{U}|s)\}}=\sum_{a\in \mathds{Z}_p^l}\mathds{1}_{\{g(a,m)\in A_{\epsilon}^n(\hat{U}|s)\}}
\end{align*}
Let $Z$ be a uniform random variable over $\gamma\left(\mathds{Z}_p-\frac{(p-1)}{2}\right)$ and hence $Z^n$ a uniform random variable over $S^\prime$. Then we have
\begin{align*}
\mathds{E}\{\theta(s)\}=\sum_{a\in \mathds{Z}_p^l}P{\left(Z^n\in A_{\epsilon}^n(\hat{U}|s)\right)}
\end{align*}
we need the following lemmas from to proceed:

\begin{lemma}\label{lemma:mitran1}%and $\bar{\epsilon}(\delta)$ , $\bar{\epsilon}<\bar{\epsilon}(\delta)$ \bar $y$ be a sequence and
Let $P_{XY}$ be a joint distribution on $\mathds{R}^2$ and $P_X$ and $P_Y$ denote its marginals. Let $Z^n$ be a random sequence drawn according to $P_Z^n$. If $D(P_{XY}\|P_Z P_Y)$ is finite then for each $\delta>0$, there exist $\epsilon(\delta)$ such that if $\epsilon<\epsilon(\delta)$ and $y\in A_{{\epsilon}}^n(P_Y)$ then
\begin{align*}
\lim\sup \frac{1}{n}\log P_Z^n\!\left((Z^n,y)\!\in\! A_\epsilon^n(P_{XY}\right)\!\le \!-D(P_{XY}\|P_ZP_Y)\!+\!\delta
\end{align*}
\end{lemma}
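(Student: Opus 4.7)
The plan is to reduce to the discrete case by quantizing $\mathds{R}^2$ into a finite measurable partition, then invoke classical type counting, with the partition-supremum characterization of relative entropy controlling the loss from discretization. The starting point is the identity $D(P_{XY}\|P_ZP_Y) = \sup_{\mathcal{Q}} D(P_{XY}^{\mathcal{Q}}\|(P_ZP_Y)^{\mathcal{Q}})$, where the supremum is taken over finite measurable partitions $\mathcal{Q}$ of $\mathds{R}^2$ and the superscript denotes the induced discrete measure on the cells. Fix $\eta>0$ and choose a product partition $\mathcal{Q}=\{Q_1,\ldots,Q_N\}$ into rectangles of small diameter whose cell boundaries are null under both $P_{XY}$ and $P_ZP_Y$ and that achieves $D(P_{XY}^{\mathcal{Q}}\|(P_ZP_Y)^{\mathcal{Q}})\ge D(P_{XY}\|P_ZP_Y)-\eta$; the finiteness hypothesis $D(P_{XY}\|P_ZP_Y)<\infty$ is precisely what makes this approximation work.

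Next I would verify that if $\epsilon$ is small enough relative to the partition geometry, then weak* $\epsilon$-typicality $(z,y)\in A_\epsilon^n(P_{XY})$ forces the \emph{discrete} empirical $\bar{P}_{zy}^{\mathcal{Q}}$, obtained by recording only the cell index of each pair $(z_i,y_i)$, to lie in a small total-variation ball around $P_{XY}^{\mathcal{Q}}$; similarly $y\in A_\epsilon^n(P_Y)$ yields closeness of the discretized $y$-marginal to $P_Y^{\mathcal{Q}}$. This is a Portmanteau-type step: since cell boundaries are $P_{XY}$-null, cell indicators can be sandwiched between continuous functions agreeing $P_{XY}$-almost everywhere, and Prokhorov closeness then transfers to closeness of cell frequencies. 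After this reduction the problem becomes purely finite alphabet: the $P_Z^n$-probability that the cell-indexed $z$-sequence together with the fixed cell-indexed $y$ yields a joint type near $P_{XY}^{\mathcal{Q}}$ is a sum over polynomially many discrete joint types, each of mass at most $\exp(-nD(P_{XY}^{\mathcal{Q}}\|(P_ZP_Y)^{\mathcal{Q}})+o(n))$ by standard type-class counting. This yields $\limsup\frac{1}{n}\log P_Z^n(\cdot)\le -D(P_{XY}^{\mathcal{Q}}\|(P_ZP_Y)^{\mathcal{Q}})+o(1)\le -D(P_{XY}\|P_ZP_Y)+\eta+o(1)$ uniformly over $y\in A_\epsilon^n(P_Y)$, and since $\eta$ is arbitrary, absorbing the slack into the statement's $\delta$ completes the proof.

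The main obstacle is the discretization transfer rather than the combinatorics. Prokhorov closeness controls probabilities of open and closed sets through $\epsilon$-enlargements and an additive $\epsilon$ slack, whereas cell indicators are not continuous, so one cannot move from weak* typicality to closeness of cell frequencies without exploiting that cell boundaries are null under $P_{XY}$. Arranging simultaneously a small cell diameter, null boundaries, and an $\eta$-gap in the partition supremum of relative entropy takes some bookkeeping, and choosing the right regime for $\epsilon$ in terms of the partition geometry is where the $\epsilon(\delta)$ in the statement comes from. A secondary subtlety is that $P_Z$ need not equal the $X$-marginal of $P_{XY}$, so the bound has to track the mismatch through the empirical marginal of $y$; this is handled cleanly once the finite-alphabet picture is in place. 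Apart from these two points the argument is the familiar Sanov/type-counting recipe.
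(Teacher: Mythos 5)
Your proposal follows essentially the same route as the paper: quantize $\mathds{R}^2$ by a finite partition, argue that weak* typicality controls the empirical cell frequencies, invoke discrete large-deviation/type-theoretic bounds to get the exponent $D(Q_{XY}\|Q_ZQ_Y)$, and then choose the partition so this is within $\eta$ of $D(P_{XY}\|P_ZP_Y)$. The only cosmetic difference is that the paper conditions on the cell index of each $y_i$ and applies Sanov's theorem per cell (summing $\bar{Q}_y(A_j)D(Q_{X|Y}(\cdot|A_j)\|Q_Z)$), whereas you invoke joint conditional type counting directly; these are equivalent at this level of precision, and your explicit insistence on cell boundaries being null under $P_{XY}$ and $P_ZP_Y$ makes the Prokhorov-to-frequency transfer cleaner than the paper's implicit appeal to it.
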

\begin{proof}
This lemma is a generalization of Theorem 21 of \cite{mitran_polish}. The proof is provided in the Appendix.
\end{proof}

\begin{lemma}\label{lemma:mitran2}
Let $P_{XY}$ be a joint distribution on $\mathds{R}^2$ and $P_X$ and $P_Y$ denote its marginals. Let $Z^n$ be a random sequence drawn according to $P_Z^n$. Then for each $\epsilon,\delta>0$, there exist $\bar{\epsilon}(\epsilon,\delta)$ such that if $y\in A_{\bar{\epsilon}}^n(P_Y)$ then
\begin{align*}
\lim\inf \frac{1}{n}\log P_Z^n\!\left((Z^n,y)\!\in\!A_\epsilon^n(P_{XY}\right)\!\ge \!-D(P_{XY}\|P_ZP_Y)\!-\!\delta
\end{align*}
\end{lemma}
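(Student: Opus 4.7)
The plan is to establish this via a standard change-of-measure (tilting) argument, which is the typical route for proving the lower-bound companion to Lemma \ref{lemma:mitran1}. Assume without loss of generality that $D(P_{XY}\|P_ZP_Y)<\infty$, since otherwise the right-hand side is $-\infty$ and the claim is vacuous. Finiteness of the divergence yields $P_{XY}\ll P_Z\otimes P_Y$, so I may let $f = dP_{XY}/d(P_Z\otimes P_Y)$ denote the Radon--Nikodym derivative; then $D(P_{XY}\|P_ZP_Y) = \int \log f \, dP_{XY}$.

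First I would construct, for each $y\in A_{\bar\epsilon}^n(P_Y)$, an auxiliary probability measure $\tilde Q_y$ on $\mathds{R}^n$ whose density with respect to $P_Z^n$ is $\prod_{i=1}^n f(z_i,y_i)$. This is well-defined for $P_Y^n$-a.e.\ $y$, and under $\tilde Q_y$ the coordinates $Z_1,\ldots,Z_n$ are independent with $Z_i$ distributed according to the conditional $P_{X|Y=y_i}$. The motivation is that $\tilde Q_y$ is the natural measure that places $(Z^n,y)$ inside the weakly* typical set.

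Second I would perform the change of measure
\[
P_Z^n\!\left((Z^n,y)\in A_\epsilon^n(P_{XY})\right) \;=\; \int_{\{z:(z,y)\in A_\epsilon^n\}} \prod_{i=1}^n \frac{1}{f(z_i,y_i)} \, d\tilde Q_y(z),
\]
and then restrict the integral to a ``good'' event $E\subseteq \mathds{R}^n$ on which both (i) $(z,y)\in A_\epsilon^n(P_{XY})$ and (ii) $\frac{1}{n}\sum_{i=1}^n \log f(z_i,y_i) \le D(P_{XY}\|P_ZP_Y) + \delta/2$. On $E$ the integrand is at least $\exp(-n(D+\delta/2))$, so the desired bound reduces to showing $\tilde Q_y(E)\ge \tfrac{1}{2}$ for $\bar\epsilon$ small enough and all $y\in A_{\bar\epsilon}^n(P_Y)$; taking $\frac{1}{n}\log$ then gives the claim with room to spare for $\delta$.

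The main obstacle is verifying $\tilde Q_y(E)\to 1$ uniformly over $y\in A_{\bar\epsilon}^n(P_Y)$, since under $\tilde Q_y$ the $Z_i$ are independent but not identically distributed, so one needs triangular-array type laws of large numbers rather than ordinary i.i.d.\ AEP. For (i) I would use the fact that the Prokhorov metric is compatible with weak convergence: since the empirical of $y$ is $\bar\epsilon$-close to $P_Y$ and conditionally on $y_i$ we draw $Z_i\sim P_{X|Y=y_i}$, a standard Glivenko--Cantelli / Varadarajan argument shows that the joint empirical of $(Z_i,y_i)$ concentrates in Prokhorov distance near $P_{XY}$ as $n\to\infty$ and $\bar\epsilon\to 0$. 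For (ii) an analogous triangular-array weak law applied to the real-valued random variables $\log f(Z_i,y_i)$ gives concentration around $\int\log f\, dP_{XY}=D$. The delicate point is that $\log f$ need not be bounded; following the line of argument in \cite{mitran_polish}, I would handle this by a truncation of $f$ at levels $\pm M$, establishing the LLN for the truncated statistic via Chebyshev and then sending $M\to\infty$ using the $P_{XY}$-integrability of $\log f$ implied by $D<\infty$.
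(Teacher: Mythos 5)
Your overall strategy (tilting $P_Z^n$ to the conditional product measure and lower-bounding by a change of measure restricted to a good event) is a genuinely different route from the paper, which instead quantizes $\mathds{R}$ by a finite partition, applies the finite-alphabet Sanov-type lower bound (along the lines of Theorem 22 of \cite{mitran_polish}) to the quantized empirical measures, and transfers back to the continuous setting via the triangle inequality for the Prokhorov metric together with the approximation $D(Q_{XY}\|Q_ZQ_Y)\approx D(P_{XY}\|P_ZP_Y)$. However, your argument has a genuine gap at its crux, namely the claim that $\tilde Q_y(E)\ge \tfrac12$ uniformly over all $y\in A^n_{\bar\epsilon}(P_Y)$. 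The hypothesis on $y$ is only weak* typicality, $\pi_1(\bar P_y,P_Y)<\bar\epsilon$, and this does not control the quantities your good event is centered on. Under $\tilde Q_y$ the mean of $\frac1n\sum_i\log f(Z_i,y_i)$ is $\frac1n\sum_i D(P_{X|Y=y_i}\|P_Z)$, an empirical average of the function $g(y)=D(P_{X|Y=y}\|P_Z)$, which is in general unbounded and discontinuous even when $\int g\,dP_Y=D<\infty$. Prokhorov closeness of $\bar P_y$ to $P_Y$ allows an $O(\bar\epsilon)$ fraction of the coordinates $y_i$ to sit where $g$ is arbitrarily large, or even outside the support of $P_Y$ where $f(\cdot,y_i)$ and $P_{X|Y=y_i}$ are not defined at all; for such admissible $y$ the centering $\frac1n\sum_i g(y_i)$ can exceed $D+\delta$ by an arbitrary amount, your event (ii) then has $\tilde Q_y$-probability tending to $0$ rather than $1$, and the restricted change-of-measure bound becomes vacuous. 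Truncating $\log f$ at $\pm M$ does not repair this, because the defect is in the $y$-dependent centering, not in the fluctuations. The same difficulty already afflicts your step (i): the kernel $y\mapsto P_{X|Y=y}$ is only measurable, not weak*-continuous, so $\pi_1(\bar P_y,P_Y)<\bar\epsilon$ does not imply that the mixture $\frac1n\sum_i\delta_{y_i}\otimes P_{X|Y=y_i}$ is Prokhorov-close to $P_{XY}$, and Glivenko--Cantelli/Varadarajan arguments apply to $y_i$ drawn i.i.d.\ from $P_Y$, not uniformly over deterministic weakly* typical $y$.

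To close the gap you would have to either (a) first coarsen to a finite partition, tilt only on cells of positive $Q_Y$-mass, and handle the $O(\bar\epsilon)$ bad coordinates (large $g(y_i)$, or $y_i$ off the support) by leaving them untilted, using the fact that an $\epsilon$-fraction of unconstrained coordinates moves the joint empirical measure by at most $\epsilon$ in Prokhorov distance; or (b) adopt the paper's route, in which the required uniformity over $y\in A^n_{\bar\epsilon}(P_Y)$ is reduced to finitely many cell frequencies, each within $\epsilon_1$ of its true probability, so that only finitely many finite conditional divergences enter and the exponent $D(Q_{XY}\|Q_ZQ_Y)\le D(P_{XY}\|P_ZP_Y)$ is controlled automatically. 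Either fix essentially reintroduces the quantization step your outline tries to bypass.
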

\begin{proof}
This lemma is a generalization of Theorem 22 of \cite{mitran_polish}. The proof is provided in the Appendix.
\end{proof}

Using these lemmas we get
\begin{align*}
\mathds{E}\{\theta(s)\}=p^l 2^{-n[D(P_{\hat{U}S}\|P_ZP_S)+O(\epsilon)]}
\end{align*}
Similarly, let $Z^n=g(a,m)$ and $\tilde{Z}^n=g(\tilde{a},m)$. Note that $Z^n$ and $\tilde{Z}^n$ are equal if $a=\tilde{a}$ and are independent if $a\ne\tilde{a}$. We have
\begin{align*}
\mathds{E}\{\theta(s)^2\}&=\sum_{a,\tilde{a}\in \mathds{Z}_p^l}P{\left(Z^n,\tilde{Z}^n\in A_{\epsilon}^n(\hat{U}|s)\right)}\\
&=\sum_{a\in \mathds{Z}_p^l}P{\left(Z^n\in A_{\epsilon}^n(\hat{U}|s)\right)}\\
&\qquad \qquad+\sum_{\substack{a,\tilde{a}\in \mathds{Z}_p^l\\a\ne\tilde{a}}}P{\left(Z^n\in A_{\epsilon}^n(\hat{U}|s)\right)}^2\\
&=p^l 2^{-n[D(P_{\hat{U}S}\|P_ZP_S)+O(\epsilon)]}\\
&\qquad \qquad+p^l (p^l -1)2^{-2n[D(P_{\hat{U}S}\|P_ZP_S)+O(\epsilon)]}
\end{align*}
Therefore
\begin{align*}
\mbox{var}\{\theta(s)\}&=\mathds{E}\{\theta(s)^2\}-\mathds{E}\{\theta(s)\}^2\\
&\le p^l 2^{-n[D(P_{\hat{U}S}\|P_ZP_S)+O(\epsilon)]}
\end{align*}
Hence,
\begin{align*}
P(\theta(s)=0)&\le P\left(\left|\theta(s)-\mathds{E}\{\theta(s)\}\right|\ge \mathds{E}\{\theta(s)\right)\\
&\stackrel{(a)}{\le}\frac{\mbox{var}\{\theta(s)\}}{\mathds{E}\{\theta(s)\}^2}\\
&\le p^l 2^{-n[D(P_{\hat{U}S}\|P_ZP_S]+O(\epsilon)]}
\end{align*}
Where $(a)$ follows from Chebyshev's inequality. This bound is valid for all $s\in \mathds{S}^n$. Therefore if
\begin{align}\label{eqn:EncErr_Ch}
\frac{l}{n}\log p> D(P_{\hat{U}S}\|P_ZP_S)
\end{align}
then the probability of encoding error goes to zero as the block length increases.\\

\subsubsection{Decoding Error}
The decoder declares error if there is no bin $\mathfrak{B}_m$ containing a sequence jointly typical with $y$ where $y$ is the received channel output or if there are multiple bins containing sequences jointly typical with $y$. Assume that the message $m$ has been encoded to $x$ according to $W_{X|SU}$ where $u=g(a,m)$ and the channel state is $s$. The channel output $y$ is jointly typical with $u$ with high probability. Given $m,s,a$ and $u$, the probability of decoding error is upper bounded by
\begin{align*}
P_{err}&\le\! \sum_{\substack{\tilde{m}\in \mathds{Z}_p^k\\\tilde{m}\ne m}}\!\sum_{\tilde{a}\in \mathds{Z}_p^l}\!\!P\left(g(\tilde{a},\tilde{m})\in A_\epsilon^n(\hat{U}|y)|g(a,m)\in
A_\epsilon^n(\hat{U}|y)\right)\\
&\stackrel{(a)}{=}p^lp^k2^{-n[D(P_{\hat{U}Y}\|P_ZP_Y)+O(\epsilon)]}
\end{align*}
Where in $(a)$ we use Lemmas \ref{lemma:m_tildem_indep}, \ref{lemma:mitran1} and \ref{lemma:mitran2}. Hence the probability of decoding error goes to zero if
\begin{align}\label{eqn:DecErr_Ch}
\frac{k+l}{n}\log p< D(P_{\hat{U}Y}\|P_ZP_Y)
\end{align}

\subsubsection{The Achievable Rate} Using (\ref{eqn:EncErr_Ch}) and (\ref{eqn:DecErr_Ch}), we conclude that if we choose $\frac{l}{n}\log p$ sufficiently close to $D(P_{\hat{U}S}\|P_ZP_S)$ and $\frac{k+l}{n}\log p$ sufficiently close to $D(P_{\hat{U}S}\|P_ZP_S)$ we can achieve the rate
\begin{align*}
R&=\frac{k}{n}\log p \approx D(P_{\hat{U}Y}\|P_ZP_Y)-D(P_{\hat{U}S}\|P_ZP_S)\\
&=I(\hat{U};Y)-I(\hat{U};S)
\end{align*}

\subsection{Arbitrary $U$ and Bounded Continuous Cost Function}
%Let $\{A_1,A_2,\cdots,A_r\}$ be a finite measurable partition of $\mathds{R}^d$ and pick arbitrary elements $a_i\in A_i$ for $i=1,\cdots,r$. Define the set $Q=\{a_1,\cdots,a_r\}$. For random variables $U$ and $Y$ on $\mathds{R}^d$ with measure $P_{UY}$ define the quantized random variables $U_Q$ and $Y_Q$ on $Q\subseteq\mathds{R}^d$ with the measure
%\begin{align*}
%P_{U_QY_Q}(a_i,a_j)=P_{UY}(A_i,A_j)
%\end{align*}
Let $Q=\{A_1,A_2,\cdots,A_r\}$ be a finite measurable partition of $\mathds{R}^d$. For random variables $U$ and $Y$ on $\mathds{R}^d$ with measure $P_{UY}$ define the quantized random variables $U_Q$ and $Y_Q$ on $Q$ with measure %$P_{U_QY_Q}(A_i,A_j)=P_{UY}(A_i,A_j)$.
\begin{align*}
P_{U_QY_Q}(A_i,A_j)=P_{UY}(A_i,A_j)
\end{align*}
The Kullback-Leibler divergence between $U$ and $Y$ is defined as %$D(U\|Y)=\sup_{Q}D(U_Q\|Y_Q)$
\begin{align*}
D(U\|Y)=\sup_{Q}D(U_Q\|Y_Q)
\end{align*}
where $D(U_Q\|Y_Q)$ is the discrete Kullback-Leibler divergence and the supremum is taken over all finite partitions $Q$ of $\mathds{R}^d$.
%If $\mathcal{F}$ is a field generating $\mathds{R}^d$ this definition is equivalent to
%\begin{align*}
%D(U\|Y)=\sup_{Q\subseteq \mathcal{F}}D(U_Q\|Y_Q)
%\end{align*}
%Similarly, the mutual information between $U$ and $Y$ is defined as
%\begin{align*}
%I(U;Y)=\sup_{Q\subseteq \mathcal{F}}I(U_Q;Y_Q)
%\end{align*}
Similarly, the mutual information between $U$ and $Y$ is defined as %$I(U;Y)=\sup_{Q}I(U_Q;Y_Q)$
\begin{align*}
I(U;Y)=\sup_{Q}I(U_Q;Y_Q)
\end{align*}
where $I(U_Q;Y_Q)$ is the discrete mutual information between the two random variables and the supremum is taken over all finite partitions $Q$ of $\mathds{R}^d$.\\
We have shown in Section \ref{section:discrete_U} that for discrete random variables the region given in Theorem \ref{theorem:channel} is achievable. In this part, we make a quantization argument to generalize this result to arbitrary auxiliary random variables. Let $S,U,X,Y$ be distributed according to $P_SP_{U|S}W_{X|US}W_{Y|X}$ where in this case $U$ is an arbitrary random variable. We start with the following theorem:
\begin{theorem}\label{theorem:mutual_inf_continuous}
Let $\mathcal{F}_1\subseteq\mathcal{F}_2\subseteq\cdots$ be an increasing sequence of $\sigma$-algebras on a measurable set $A$. Let $\mathcal{F}_{\infty}$ denote the $\sigma$-algebra generated by the union $\cup_{n=1}^{\infty}\mathcal{F}_n$. Let $P$ and $Q$ be probability measures on $A$. Then
\begin{align*}
D(P|_{\mathcal{F}_n}\|Q|_{\mathcal{F}_n})\rightarrow D(P|_{\mathcal{F}_{\infty}}\|Q|_{\mathcal{F}_{\infty}})\mbox{ as }n\rightarrow \infty
\end{align*}
where $P|_{\mathcal{F}}$ denotes the restriction of $P$ on $\mathcal{F}$.
\end{theorem}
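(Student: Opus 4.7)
The plan is to prove the theorem using martingale theory applied to Radon-Nikodym derivatives, since the restrictions $P|_{\mathcal{F}_n}$ naturally give rise to a martingale structure under the filtration $(\mathcal{F}_n)$.

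First I would reduce to the absolutely continuous case. If $P|_{\mathcal{F}_\infty} \ll Q|_{\mathcal{F}_\infty}$, let $f = dP|_{\mathcal{F}_\infty}/dQ|_{\mathcal{F}_\infty}$ and set $f_n = E_Q[f \mid \mathcal{F}_n]$. A direct check shows that $f_n$ is the Radon--Nikodym derivative of $P|_{\mathcal{F}_n}$ with respect to $Q|_{\mathcal{F}_n}$, so $D(P|_{\mathcal{F}_n}\|Q|_{\mathcal{F}_n}) = E_Q[\varphi(f_n)]$ where $\varphi(x) = x\log x$ (with $\varphi(0)=0$). The sequence $(f_n)$ is a nonnegative $Q$-martingale in $L^1(Q)$, so Doob's martingale convergence theorem gives $f_n \to f$ both $Q$-almost surely and in $L^1(Q)$.

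The next step is to establish monotonicity and the upper bound. Since $\varphi$ is convex, conditional Jensen gives $\varphi(f_n) = \varphi(E_Q[f_{n+1} \mid \mathcal{F}_n]) \le E_Q[\varphi(f_{n+1}) \mid \mathcal{F}_n]$, so $(\varphi(f_n))$ is a $Q$-submartingale. Taking expectations shows that $D(P|_{\mathcal{F}_n}\|Q|_{\mathcal{F}_n})$ is nondecreasing in $n$, and the same Jensen argument applied to $f = E_Q[f \mid \mathcal{F}_\infty]$ versus $f_n$ gives $D(P|_{\mathcal{F}_n}\|Q|_{\mathcal{F}_n}) \le D(P|_{\mathcal{F}_\infty}\|Q|_{\mathcal{F}_\infty})$ for every $n$. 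Hence the limit exists and is at most $D(P|_{\mathcal{F}_\infty}\|Q|_{\mathcal{F}_\infty})$.

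For the reverse inequality, I would use Fatou's lemma after subtracting a uniform lower bound for $\varphi$. Since $\varphi(x) \ge -e^{-1}$ for all $x \ge 0$, the quantities $\varphi(f_n) + e^{-1}$ are nonnegative. Almost-sure convergence $f_n \to f$ together with continuity of $\varphi$ on $(0,\infty)$ and the convention $\varphi(0)=0$ yield $\varphi(f_n) \to \varphi(f)$ $Q$-a.s. Fatou's lemma then gives
\begin{align*}
E_Q[\varphi(f)] + e^{-1} &\le \liminf_{n\to\infty} E_Q[\varphi(f_n) + e^{-1}]\\
&= \lim_{n\to\infty} D(P|_{\mathcal{F}_n}\|Q|_{\mathcal{F}_n}) + e^{-1},
\end{align*}
so $D(P|_{\mathcal{F}_\infty}\|Q|_{\mathcal{F}_\infty}) = E_Q[\varphi(f)]$ is at most the limit, finishing the absolutely continuous case.

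The main obstacle will be the singular case, where $D(P|_{\mathcal{F}_\infty}\|Q|_{\mathcal{F}_\infty}) = +\infty$: one must show that the finite-stage divergences diverge. I would handle this by Lebesgue-decomposing $P|_{\mathcal{F}_\infty} = P_a + P_s$ with $P_a \ll Q|_{\mathcal{F}_\infty}$ and $P_s \perp Q|_{\mathcal{F}_\infty}$; pick $E \in \mathcal{F}_\infty$ with $Q(E) = 0$ and $P_s(E) = P_s(A) > 0$, and approximate $E$ in symmetric difference by sets $E_n \in \mathcal{F}_n$ (possible because $\cup_n \mathcal{F}_n$ generates $\mathcal{F}_\infty$ and is an algebra, so $\mathbf{1}_{E_n} \to \mathbf{1}_E$ in $L^1(P+Q)$ along a subsequence). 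Then $Q(E_n) \to 0$ while $P(E_n)$ stays bounded below, and the elementary two-set lower bound $D(P|_{\mathcal{F}_n}\|Q|_{\mathcal{F}_n}) \ge P(E_n)\log\frac{P(E_n)}{Q(E_n)} + (1-P(E_n))\log\frac{1-P(E_n)}{1-Q(E_n)}$ (applied to the partition $\{E_n, E_n^c\}$, using monotonicity over refinements) forces the divergences to $+\infty$, as desired.
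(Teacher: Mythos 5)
Your proof is correct, but note that the paper does not actually prove this theorem at all: its ``proof'' is a citation to \cite{mutual_info_continuous_Harremoes} and \cite{Barron_mutual_info}. What you have written is essentially the standard argument that lives in those references: the L\'evy upward martingale $f_n=E_Q[f\mid\mathcal{F}_n]$ identified with $dP|_{\mathcal{F}_n}/dQ|_{\mathcal{F}_n}$, conditional Jensen for the upper bound $D(P|_{\mathcal{F}_n}\|Q|_{\mathcal{F}_n})\le D(P|_{\mathcal{F}_\infty}\|Q|_{\mathcal{F}_\infty})$, Fatou with the uniform lower bound $\varphi\ge -e^{-1}$ for the reverse inequality (which also covers the case where the limit divergence is infinite but absolute continuity holds), and in the singular case approximation of a $Q$-null, $P$-positive set by sets in the generating algebra $\cup_n\mathcal{F}_n$ together with the two-set (data-processing) lower bound. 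Two small points you are implicitly using and should flag: (i) the identity $D(P|_{\mathcal{F}_n}\|Q|_{\mathcal{F}_n})=E_Q[\varphi(f_n)]$ requires the equivalence of the paper's sup-over-finite-partitions definition of divergence with the integral formula (the Gelfand--Yaglom--Perez theorem), which is exactly the bridge between your martingale computation and the quantity in the statement; (ii) in the singular case your approximating sets $E_k$ lie in $\mathcal{F}_{n_k}$, so to get the full limit $D(P|_{\mathcal{F}_n}\|Q|_{\mathcal{F}_n})\to\infty$ rather than a subsequential one you should either invoke monotonicity of the restricted divergence in the $\sigma$-algebra (data processing, which holds without any absolute-continuity assumption) or simply observe that $E_k\in\mathcal{F}_n$ for all $n\ge n_k$, so the binary lower bound applies to every large $n$. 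With those two sentences added, your argument is a complete, self-contained proof, which is more than the paper itself supplies.
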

\begin{IEEEproof}
Provided in \cite{mutual_info_continuous_Harremoes} and \cite{Barron_mutual_info} for example.
\end{IEEEproof}

For a prime $p>2$, a real positive number $\gamma$ and for $i=0\cdots,p-1$ define %$a_i=\frac{-\gamma(p-1)}{2}+\gamma i$ and
\begin{align*}
a_i=\frac{-\gamma(p-1)}{2}+\gamma i
\end{align*}
Define the quantization $Q_{\gamma,p}$ as $Q_{\gamma,p}=\{A_0,A_2,\cdots,A_{p-1}\}$ where % $A_0=(-\infty,a_0]$, $A_{p-1}=(a_{p-2},+\infty)$ and $A_i=(a_{i-1},a_i]$ for $i=1,\cdots,p-2$.
\begin{align*}
&A_0=(-\infty,a_0]\\
&A_i=(a_{i-1},a_i],\mbox{ for }i=1,\cdots,p-2\\
&A_{p-1}=(a_{p-2},+\infty)
\end{align*}
Let the random variable $\hat{U}_{\gamma,p}$ take values from $\{a_0,\cdots,a_{p-1}\}$ according to joint measure
\begin{align}\label{eqn:joint_distribution}
%P_{S\hat{U}XY}(B_1,a_i,B_2,B_3)=P_{SUXY}(B_1,A_i,B_2,B_3)
P_{S\hat{U}XY}(\hat{U}=a_i,SXY\in B)=P_{SUXY}(U\in A_i,SXY\in B)
\end{align}
For all Borel sets $B\subseteq \mathds{R}^3$. For a fixed $\gamma$, let $p\le q$ be two primes. Then the $\sigma$-algebra induced by $Q_{\gamma,p}$ is included in the $\sigma$-algebra induced by $Q_{\gamma,q}$. Therefore, for a fixed $\gamma$, we can use the above theorem to get
\begin{align}\label{eqn:first_I_limit}
I(U|_{\mathcal{F}_{\gamma,p}};Y|_{\mathcal{F}_{\gamma,p}})\rightarrow I(U|_{\mathcal{F}_{\gamma,\infty}};Y|_{\mathcal{F}_{\gamma,\infty}})\mbox{ as }p\rightarrow \infty
%I(\hat{U}_{\gamma,p};Y)\rightarrow I(\hat{U}_{\gamma,\infty};Y)\mbox{ as }p\rightarrow \infty
\end{align}
where $U|_{\mathcal{F}_{\gamma,\infty}}$ is a random variable over $Q_{\gamma,\infty}=\{A_i|i\in\mathds{Z}\}$ where $A_i=\frac{\gamma}{2}+(\gamma i,\gamma(i+1)]$ with measure $P_{U|_{\mathcal{F}_{\gamma,\infty}}}(A_i)=P_U(A_i)$.\\
%\begin{align*}
%P_{U|_{\mathcal{F}_{\gamma,\infty}}}(A_i)=P_U(A_i)
%\end{align*}
Let $\gamma_0=1$ and define $\gamma_n=\frac{1}{2^n}$. Note that if $m>n$ then $\mathcal{F}_{\gamma_n,\infty}$ is included in $\mathcal{F}_{\gamma_m,\infty}$. Also, since dyadic intervals generate the Borel Sigma field (\cite{morters_brownian_motion} for example), the restriction of $U$ to the sigma algebra generated by $\cup_{n=1}^{\infty}\mathcal{F}_{\gamma_n,\infty}$ is $U$ itself. We can use Theorem \ref{theorem:mutual_inf_continuous} to get
%Since dyadic intervals generate the Borel sigma field, we have
%\begin{align*}
%P_{\hat{U}_{\gamma_n,\infty}Y}\rightarrow P_{UY}\mbox{ as }n\rightarrow \infty
%\end{align*}
%Using this fact and theorem () we get
\begin{align}\label{eqn:second_I_limit}
I(U|_{\mathcal{F}_{\gamma_n,\infty}};Y|_{\mathcal{F}_{\gamma_n,\infty}})\rightarrow I(U;Y)\mbox{ as }n\rightarrow \infty
\end{align}
Combining (\ref{eqn:first_I_limit}) and (\ref{eqn:second_I_limit}) we conclude that for all $\epsilon>0$, there exist $\Gamma$ and $P$ such that if $\gamma\le \Gamma$ and $p\ge \Gamma$ then
\begin{align*}
\left|I(U|_{\mathcal{F}_{\gamma,p}};Y|_{\mathcal{F}_{\gamma,p}})-I(U;Y)\right|<\epsilon
\end{align*}
Since quantization reduces the mutual information ($X_Q\rightarrow X\rightarrow Y$), we have
\begin{align*}
I(U|_{\mathcal{F}_{\gamma,p}};Y|_{\mathcal{F}_{\gamma,p}})\le I(U|_{\mathcal{F}_{\gamma,p}};Y)\le I(U;Y)
\end{align*}
Therefore $\left|I(U|_{\mathcal{F}_{\gamma,p}};Y)-I(U;Y)\right|<\epsilon$. Also note that $I(U|_{\mathcal{F}_{\gamma,p}};Y)= I(\hat{U}_{\gamma,p};Y)$ since we define the joint measure to be the same. Therefore
\begin{align}\label{eqn:I_U_Y}
\left|I(\hat{U}_{\gamma,p};Y)-I(U;Y)\right|\le \epsilon
\end{align}
With a similar argument, for all $\epsilon>0$ there exist $\gamma$ and $p$ such that
\begin{align}\label{eqn:I_U_S}
\left|I(\hat{U}_{\gamma,p};S)-I(U;S)\right|\le \epsilon
\end{align}
if we take the maximum of the two $p$'s and the minimum of the two $\gamma$'s, we can say for all $\epsilon>0$ there exist $\gamma$ and $p$ such that both (\ref{eqn:I_U_Y}) and (\ref{eqn:I_U_S}) happen.\\
%For $i\in\mathds{Z}$, there exists an element $a_i\in A_i$ such that $\mathds{E}_{W(\cdot|a_i)}\{w(X)\}\le \mathds{E}_{P_{UX}}\{w(X)|U\in A_i\}$. With an abuse of notation let $\hat{U}_{\gamma,p}$ be a random variable taking values from $\{a_i|i\in\mathds{Z}\}$ with $P_{\hat{U}}(A_i)=P_U(A_i)$. Then we have
%\begin{align*}
%\mathds{E}_{P_{S\hat{U}_{\gamma,p}X}}\{w(X)\}\le \mathds{E}_{P_{SUX}}\{w(X)\}\le W+\epsilon
%\end{align*}
consider the sequence $P_{S\hat{U}_{\gamma_n,p}X}$ as $n,p\rightarrow \infty$. In the next lemma we show that under certain conditions this sequence converges in the weak* sense to $P_{SUX}$.
\begin{lemma}
Consider the sequence $P_{S\hat{U}_{\gamma_n,p}X}$ where $n\rightarrow\infty$ and $p$ is such that $\gamma_n p\rightarrow \infty$ as $n\rightarrow\infty$ (Take $p$ to be the smallest prime larger than $2^{2n}$ for example.). Then the sequence converges to $P_{SUX}$ in the weak* sense as $n\rightarrow\infty$.
\end{lemma}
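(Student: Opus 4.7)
The plan is to recognize the joint measure $P_{S\hat{U}_{\gamma_n,p}X}$ as the pushforward of $P_{SUX}$ under a simple scalar quantization map acting only on the $U$ coordinate, and then invoke the portmanteau characterization of weak* convergence via bounded continuous test functions. Let $\phi_{\gamma,p}:\mathds{R}\to\mathds{R}$ denote the quantizer that sends $u\in A_i$ to the representative $a_i$. By the definition (\ref{eqn:joint_distribution}), the joint law of $(S,\hat{U}_{\gamma,p},X)$ is exactly the pushforward of $P_{SUX}$ under the map $(s,u,x)\mapsto (s,\phi_{\gamma,p}(u),x)$.

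First I would establish pointwise convergence $\phi_{\gamma_n,p}(u)\to u$ for every fixed $u\in\mathds{R}$. Here the hypothesis $\gamma_n p\to\infty$ enters directly: for any fixed $u$, once $\gamma_n p/2$ exceeds $|u|$, the point $u$ lies strictly inside one of the bounded cells $A_i$ with $1\le i\le p-2$, so by construction $|\phi_{\gamma_n,p}(u)-u|\le \gamma_n\to 0$. In particular, the quantization maps converge pointwise to the identity on all of $\mathds{R}$.

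Next, for an arbitrary bounded continuous test function $f:\mathds{R}^3\to\mathds{R}$, I would write
\begin{align*}
\int f\,dP_{S\hat{U}_{\gamma_n,p}X}=\int f\bigl(s,\phi_{\gamma_n,p}(u),x\bigr)\,dP_{SUX}(s,u,x),
\end{align*}
observe by continuity of $f$ and the pointwise convergence above that the integrand converges pointwise to $f(s,u,x)$, and apply dominated convergence with dominating constant $\|f\|_{\infty}$ to conclude $\int f\,dP_{S\hat{U}_{\gamma_n,p}X}\to \int f\,dP_{SUX}$. By the portmanteau theorem this is equivalent to weak* convergence, which in turn (by standard equivalence of notions on a separable metric space) is convergence in the Prokhorov metric $\pi_3$ used throughout the paper.

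The only delicate point, rather than a real obstacle, is the treatment of the two unbounded tail cells $A_0=(-\infty,a_0]$ and $A_{p-1}=(a_{p-2},\infty)$, on which $\phi_{\gamma_n,p}$ is far from the identity. The condition $\gamma_n p\to\infty$ is designed precisely to neutralize this: it ensures that every point of $\mathds{R}$ eventually escapes the tails, so the set where $\phi_{\gamma_n,p}$ fails to be close to the identity is $P_U$-negligible in the limit and contributes nothing to any bounded integral. With the pushforward framing and the pointwise convergence in hand, no additional machinery is required.
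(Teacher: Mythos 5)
Your proof is correct, but it takes a genuinely different route from the paper's. The paper argues through the three-dimensional distribution function: at each continuity point $(s,u,x)$ of $F_{SUX}$ it picks $\delta$ so that $F_{SUX}(s,u\pm\delta,x)$ is within $\epsilon$ of $F_{SUX}(s,u,x)$, chooses quantization points $a_i\in[u-\delta,u]$ and $a_j\in[u,u+\delta]$ (possible precisely because $\gamma_n<\delta$ while $\gamma_n p\to\infty$), and sandwiches $F_{S\hat{U}_{\gamma_n,p}X}(s,u,x)$ between $F_{SUX}(s,u-\delta,x)$ and $F_{SUX}(s,u+\delta,x)$ by monotonicity, so that pointwise convergence of the CDF at continuity points gives weak* convergence. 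You instead identify $P_{S\hat{U}_{\gamma_n,p}X}$ as the pushforward of $P_{SUX}$ under $(s,u,x)\mapsto(s,\phi_{\gamma_n,p}(u),x)$, prove $\phi_{\gamma_n,p}\to\mathrm{id}$ pointwise (with $\gamma_n p\to\infty$ ensuring every fixed $u$ eventually escapes the unbounded tail cells and incurs error at most $\gamma_n\to 0$), and conclude by testing against bounded continuous functions with dominated convergence --- essentially the mapping theorem for almost-everywhere convergent maps. Both arguments are complete and elementary; yours is dimension-free, bypasses the CDF characterization specific to $\mathds{R}^d$, and directly delivers convergence of expectations of bounded continuous functions, which is exactly what the paper invokes right after the lemma for the cost constraint, whereas the paper's sandwich is more self-contained for readers thinking in terms of distribution functions. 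One cosmetic point: for fixed $u$ the clean statement is $|\phi_{\gamma_n,p}(u)-u|\le\gamma_n$ once $\gamma_n(p-1)/2>|u|$, since $u$ may fall in the top cell $A_{p-1}$ and still lie within $\gamma_n$ of $a_{p-1}$; this does not affect your argument.
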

\begin{proof}
%Provided in a more complete version \cite{nested_lattice_arxiv}.
It suffices to show that the three dimensional cumulative distribution function $F_{S\hat{U}_{\gamma_n,p}X}$ converges to $F_{SUX}$ point-wise in all points $(s,u,x)\in\mathds{R}^3$ where $F$ is continuous. Let $(s,u,x)$ be a point where $F$ is continuous and for an arbitrary $\epsilon>0$, let $\delta$ be such that
\begin{align*}
\left|F_{SUX}(s,u-\delta,x)-F_{SUX}(s,u,x)\right|<\epsilon\\
\left|F_{SUX}(s,u+\delta,x)-F_{SUX}(s,u,x)\right|<\epsilon\\
\end{align*}
Let $p$ be such that $\gamma_n=\frac{1}{2^n}<\delta$ and find $p$ accordingly. Then there exist points $a_i,a_j$ such that $a_i\in [u-\delta,u]$ and $a_j\in [u,u+\delta]$. We have
\begin{align*}
F_{SUX}(s,u-\delta,x)&\le F_{S\hat{U}_{\gamma_n,p}X}(s,a_i,x)\\
&\le F_{S\hat{U}_{\gamma_n,p}X}(s,u,x)\\
&\le F_{S\hat{U}_{\gamma_n,p}X}(s,a_j,x)\\
&\le F_{SUX}(s,u+\delta,x)
\end{align*}
Therefore $\left|F_{S\hat{U}_{\gamma_n,p}X}(s,u,x)-F_{SUX}(s,u,x)\right|\le \epsilon$. This shows the point-wise convergence of $F_{S\hat{U}_{\gamma_n,p}X}$.
\end{proof}
The above lemma implies $\mathds{E}_{P_{S\hat{U}_{\gamma_n,p}X}}\{w(X,S)\}$ converges to $\mathds{E}_{P_{SUX}}\{w(X,S)\}\le W$ since $w$ is assumed to be bounded continuous.\\
We have shown that for arbitrary $P_{U|S}$ and $W_{X|SU}$, one can find $P_{\hat{U}|S}$ and $W_{X|S\hat{U}}$  induced from (\ref{eqn:joint_distribution}) such that $\hat{U}$ is a discrete variable and
\begin{align*}
&I(\hat{U};Y)-I(\hat{U};S)\approx I(U;Y)-I(U;S)\\
&\mathds{E}_{P_{S\hat{U}X}}\{w(X,S)\}\approx \mathds{E}_{P_{SUX}}\{w(X,S)\}
\end{align*}
Hence, using the result of section \ref{section:discrete_U}, we have shown the achievability of the rate region given in Theorem \ref{theorem:channel} for arbitrary auxiliary random variables when the cost function is bounded and continuous.
\subsection{Arbitrary $U$ and Continuous Cost Function}
For a positive number $l$, define the clipped random variable $\hat{X}$ by $\hat{X}=\mbox{sign}(X)\min(l,|X|)$ and let $\hat{Y}$ be distributed according to $W_{\hat{Y}|\hat{X}}(\cdot,\hat{x})=W_{Y|X}(\cdot,\hat{x})$.
\begin{lemma}
As $l\rightarrow \infty$, $I(U;\hat{Y})\rightarrow I(U;Y)$.
\end{lemma}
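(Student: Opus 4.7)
My plan is to construct a coupling under which $Y$ and $\hat{Y}$ agree on a high-probability event, pass this to total-variation convergence of the joint law $(U,\hat{Y})$ to $(U,Y)$, and then invoke the partition-based definition of mutual information to transfer the convergence. Since $\mathcal{Y}\subseteq\mathds{R}$ is a standard Borel space, the kernel $W_{Y|X}$ admits a representation $Y=\phi(X,W)$ for some Borel measurable $\phi$ and a uniform $[0,1]$-valued $W$ independent of $(U,X)$; I would reuse the same noise realization $W$ to define $\hat{Y}:=\phi(\hat{X},W)$. On the event $E_l=\{|X|\le l\}$ we have $\hat{X}=X$, hence $\hat{Y}=Y$; and since $X$ is real-valued, $P(E_l^c)=P(|X|>l)\to 0$. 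Consequently $\|P_{U\hat{Y}}-P_{UY}\|_{\mathrm{TV}}\le P(E_l^c)\to 0$.

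For the liminf direction, fix any finite Borel partition $Q$ of $\mathcal{U}\times\mathcal{Y}$. Total-variation convergence gives $P_{U\hat{Y}}(Q_{ij})\to P_{UY}(Q_{ij})$, and continuity of $p\mapsto -p\log p$ on $[0,1]$ yields $I(U_Q;\hat{Y}_Q)\to I(U_Q;Y_Q)$. Hence $\liminf_l I(U;\hat{Y})\ge I(U_Q;Y_Q)$; taking supremum over $Q$ and using the partition-based definition of mutual information (given in the previous subsection) produces $\liminf_l I(U;\hat{Y})\ge I(U;Y)$.

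The limsup direction is the main obstacle, since mutual information is not in general upper semi-continuous under weak or total-variation convergence without additional structure. I would exploit the coupling explicitly via the chain-rule identity
\begin{align*}
I(U;\hat{Y})-I(U;Y) \;=\; I(U;\hat{Y}\mid Y) - I(U;Y\mid\hat{Y}) \;\le\; I(U;\hat{Y}\mid Y),
\end{align*}
and then bound $I(U;\hat{Y}\mid Y)\le H(\mathds{1}_{E_l}) + I(U;\hat{Y}\mid Y,\mathds{1}_{E_l})$. The first term is dominated by the binary entropy of $P(E_l^c)$ and vanishes as $l\to\infty$. The second term is identically zero on $E_l$ (since $\hat{Y}=Y$ there), and on $E_l^c$ is controlled, via the Markov property $U\to X\to\hat{Y}$ inherent in our coupling, by $I(U;X\mid Y,E_l^c)$, so its total contribution is at most $P(E_l^c)\cdot I(U;X\mid Y,E_l^c)\to 0$ under the finiteness assumption $I(U;X)<\infty$. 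The complementary case $I(U;Y)=\infty$ is already handled by the liminf step, so combining both yields $I(U;\hat{Y})\to I(U;Y)$.
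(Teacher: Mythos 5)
Your liminf half is sound and is in substance the paper's own argument: the paper fixes a partition $Q$ with $|I(U_Q;Y_Q)-I(U;Y)|<\epsilon$ and shows $|P_{U_Q\hat{Y}_Q}(A_i,A_j)-P_{U_QY_Q}(A_i,A_j)|\le\delta$ once $P_X((-l,l))>1-\delta$, which is exactly your total-variation bound specialized to $Q$; continuity of finite mutual information and the supremum over partitions then give $\liminf_l I(U;\hat{Y})\ge I(U;Y)$. (In fact the paper's proof, as written, only establishes this direction, which is the one the achievability argument actually needs; so your insistence on also proving the limsup direction goes beyond the paper.) Your coupling $Y=\phi(X,W)$, $\hat{Y}=\phi(\hat{X},W)$ is a clean way to package the paper's computation.

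The gap is in the limsup direction, at the step ``its total contribution is at most $P(E_l^c)\cdot I(U;X\mid Y,E_l^c)\to 0$ under $I(U;X)<\infty$.'' This is asserted, not proved, and it is the crux: a conditional mutual information given an event whose probability tends to zero can grow without bound, so the product does not vanish merely because $P(E_l^c)\to 0$; to make it vanish you would need an information-density/dominated-convergence argument, and the conditioning on $Y$ adds a further wrinkle (the conditional law of $U$ given $\{|X|>l\}$ and $Y$ is not a restriction of $P_{U|Y}$, so the tail of $I(U;X\mid Y)$ does not directly dominate this term). You also quietly import the hypothesis $I(U;X)<\infty$, which is not in the lemma, and your case split (``the complementary case $I(U;Y)=\infty$'') leaves uncovered the case $I(U;X)=\infty$ with $I(U;Y)<\infty$. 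Finally, the invocation of the Markov chain $U\to X\to\hat{Y}$ after conditioning on $Y$ needs the extra observation that $I(U;W\mid X,Y)=0$ (because $Y$ is a function of $(X,W)$ and $W$ is independent of $(U,X)$); that is repairable, but it is not automatic. A much simpler repair avoids all of this: condition on $\mathds{1}_{E_l}$ rather than on $Y$. On $E_l$ you have $\hat{Y}=Y$, so $I(U;\hat{Y}\mid E_l)=I(U;Y\mid E_l)$ and $P(E_l)I(U;Y\mid E_l)\le I(U;Y\mid\mathds{1}_{E_l})\le I(U;Y)+H(\mathds{1}_{E_l})$; on $E_l^c$ the clipped input $\hat{X}$ takes only the two values $\pm l$ and $W$ remains independent of $(U,\hat{X})$, so $I(U;\hat{Y}\mid E_l^c)\le H(\hat{X}\mid E_l^c)\le\log 2$. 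Combining, $I(U;\hat{Y})\le I(U;Y)+2H(\mathds{1}_{E_l})+P(E_l^c)\log 2$, which gives $\limsup_l I(U;\hat{Y})\le I(U;Y)$ with no finiteness assumptions at all.
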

\begin{proof}
%Proved in a more complete version \cite{nested_lattice_arxiv}.
%\end{proof}
%Since $\hat{X}$ is bounded and $w$ is assumed to be continuous, $w$ is also bounded. This completes the proof.
Note that for Borel sets $B_1,B_2,B_3$ if $B_2\subseteq (-l,l)$ then
\begin{align*}
P_{U\hat{X}\hat{Y}}(B_1,B_2,B_3)=P_{UXY}(B_1,B_2,B_3)
\end{align*}
For any $\epsilon>0$, let $Q=\{A_1,\cdots,A_r\}$ be a quantization such that
\begin{align*}
\left|I(U_Q;Y_Q)-I(U;Y)\right|<\epsilon
\end{align*}
For an arbitrary $\delta>0$, assume $l$ is large enough such that $P_X((-l,l))>1-\delta$. Then
%Let $Q^{\prime}=\{A_1,\cdots,A_r\}$ be a finite quantization which generates both $Q$ and $\{(-\infty,-l],(-l,l),[l,\infty)\}$ such that $\cup_{i=1}^a A_i=(-\infty,-l]$ and $\cup_{i=a+1}^b A_i=(-\infty,-l]$ for some $1\le a< b\le r$. Then
%\begin{align*}
%\left|I(U_{Q^{\prime}};Y_{Q^{\prime}})-I(U;Y)\right|<\epsilon
%\end{align*}
%assume $(-\infty,-l)$ and $(l,\infty)$ are both generated from $Q$ (if it is not the case it can be made so by breaking up the partition). Further assume $A_1=(-\infty,-l)$ and $A_k=(l,\infty)$. Let $l$ be such that $P_X([-l,l])>1-\epsilon$.
%and
\begin{align*}
&P_{U_QY_Q}(A_i,A_j)=P_{UXY}(A_i,\mathds{R},A_j)\\
&=P_{UXY}(A_i,(-l,l),A_j)\!+\!P_{UXY}(A_i,(-\infty,-l]\!\cup\![l,\infty),A_j)\\
&\le P_{UXY}(A_i,(-l,l),A_j)+P_{UXY}(\mathds{R},(-\infty,-l]\cup[l,\infty),\mathds{R})\\
&= P_{U\hat{X}\hat{Y}}(A_i,(-l,l),A_j)+P_{X}((-\infty,-l]\cup[l,\infty))\\
&\le P_{U\hat{Y}}(A_i,A_j)+\delta\\
&= P_{U_Q\hat{Y}_Q}(A_i,A_j)+\delta
\end{align*}
Also,
\begin{align*}
P_{U_QY_Q}(A_i,A_j)&=P_{UXY}(A_i,\mathds{R},A_j)\qquad\qquad\qquad\qquad\qquad\\
&\ge P_{UXY}(A_i,(-l,l),A_j)\\
&= P_{U\hat{X}\hat{Y}}(A_i,(-l,l),A_j)\\
&\ge P_{U\hat{X}\hat{Y}}(A_i,\mathds{R},A_j)-\delta\\
&= P_{U\hat{Y}}(A_i,A_j)-\delta\\
&= P_{U_Q\hat{Y}_Q}(A_i,A_j)-\delta
\end{align*}

%\begin{align*}
%&P_{U_{Q^{\prime}}Y_{Q^{\prime}}}(A_i,A_j)\\
%&=P_{UXY}(A_i,\mathds{R},A_j)\\
%&=\sum_{m=1}^r P_{UXY}(A_i,A_m,A_j)\\
%&\le\sum_{m=1}^{b} P_{UXY}(A_i,A_m,A_j)+\sum_{m=b+1}^{r} P_{UXY}(A_i,A_m,A_j)\\
%&\le\sum_{m=1}^{b} P_{UXY}(\mathds{R},A_m,\mathds{R})+\sum_{m=b+1}^{r} P_{UXY}(A_i,A_m,A_j)\\
%&=\sum_{m=1}^{b} P_{X}(A_m)+\sum_{m=b+1}^{r} P_{U\hat{X}\hat{Y}}(A_i,A_m,A_j)\\
%&\le \sum_{m=1}^{r} P_{U\hat{X}\hat{Y}}(A_i,A_m,A_j)+\epsilon\\
%&=P_{U\hat{Y}}(A_i,A_j)+\epsilon\\
%&\le P_{U_{Q^{\prime}}\hat{Y}_{Q^{\prime}}}(A_i,A_j)+\epsilon
%\end{align*}
%Also,
%\begin{align*}
%P_{U_{Q^{\prime}}Y_{Q^{\prime}}}(A_i,A_j)&=P_{UXY}(A_i,\mathds{R},A_j)\\
%&=\sum_{m=1}^r P_{UXY}(A_i,A_m,A_j)\\
%&\ge\sum_{m=b+1}^{r} P_{UXY}(A_i,A_m,A_j)\\
%&=\sum_{m=b+1}^{r} P_{U\hat{X}\hat{Y}}(A_i,A_m,A_j)\\
%&\ge\sum_{m=1}^{r} P_{U\hat{X}\hat{Y}}(A_i,A_m,A_j)-\epsilon\\
%&=P_{U\hat{Y}}(A_i,A_j)-\epsilon\\
%&\ge P_{U_{Q^{\prime}}\hat{Y}_{Q^{\prime}}}(A_i,A_j)-\epsilon
%\end{align*}
%\begin{align*}
%P_{U_QY_Q}(A_i,A_j)&=P_{UXY}(A_i,\mathds{R},A_j)\\
%&=\sum_{m=1}^k P_{UXY}(A_i,A_m,A_j)\\
%&\ge\sum_{m=2}^{k-1} P_{UXY}(A_i,A_m,A_j)-\epsilon\\
%&\ge\sum_{m=1}^{k} P_{U\hat{X}Y}(A_i,A_m,A_j)-\epsilon\\
%&=P_{U_Q\hat{Y}_Q}(A_i,A_j)-\epsilon\\
%\end{align*}
Since the choice of $\delta$ is arbitrary and since the discrete mutual information is continuous, we conclude that as $\epsilon,\delta\rightarrow 0$ (hence $l\rightarrow \infty$), $I(U;\hat{Y})\rightarrow I(U;Y)$.
\end{proof}
Since $\hat{X}$ is bounded and $w$ is assumed to be continuous, $w$ is also bounded. This completes the proof.
\section{Source Coding}\label{section:source_coding}
In this section, we show the achievability of the rate $R=I(U;X)-I(U;S)$ for the Wyner-Ziv problem using nested lattice codes for $U$.
\begin{theorem}\label{th:approximate_markov_chain}
For the source $(\mathcal{X},\mathcal{S},\mathcal{U},P_{XS},d)$ assume $d:\mathcal{X}\times\mathcal{U}\rightarrow \mathds{R}^+$ is continuous. Let $U$ be a random variable taking values from the set $\mathcal{U}$ jointly distributed with $X$ and $S$ according to $P_{XS}W_{U|X}$ where $W_{U|X}(\cdot|\cdot)$ is a transition kernel. Further assume that there exists a measurable function $f:\mathcal{S}\times\mathcal{U}\rightarrow \mathcal{\hat{X}}$ such that $\mathds{E}\{d(X,f(S,U))\}\le D$. Then the rate $R^*(D)=I(X;U)-I(S;U)$
%\begin{align*}
%R^*(D)=I(X;U)-I(S;U)
%\end{align*}
is achievable using nested lattice codes.
\end{theorem}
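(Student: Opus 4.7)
The plan is to follow the same three-stage structure as the proof of Theorem \ref{theorem:channel}: (i) a discrete $\hat{U}$ on a mod-$p$ grid; (ii) extension to arbitrary $U$ via a dyadic-quantization/weak* argument under a bounded continuous distortion; (iii) removal of boundedness of $d$ by a clipping step.

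For (i), I would draw a random nested lattice code $(\bar{\Lambda}_i,\bar{\Lambda}_o)$ exactly as in Section \ref{section:discrete_U}, but with the roles of the two codes swapped relative to the channel problem. The encoder, given $x\in\mathcal{X}^n$, searches the entire outer lattice for some $u=g(a,m)$ with $(u,x)\in A_\epsilon^n(\hat{U}X)$ and transmits the bin index $m$; the decoder, given $m$ and the side information $s$, searches bin $\mathfrak{B}_m$ for a codeword $\hat{u}$ with $(\hat{u},s)\in A_\epsilon^n(\hat{U}S)$ and emits the sequence whose $i$th coordinate is $f(s_i,\hat{u}_i)$. For the covering step I would set $\theta(x)=\sum_{u\in\bar{\Lambda}_o}\mathds{1}\{u\in A_\epsilon^n(\hat{U}|x)\}$; Lemmas \ref{lemma:m_tildem_indep}, \ref{lemma:mitran1} and \ref{lemma:mitran2} give $\mathds{E}\{\theta(x)\}=p^{k+l}2^{-n[D(P_{\hat{U}X}\|P_ZP_X)+O(\epsilon)]}$ with a matching variance bound, so that Chebyshev's inequality forces $P(\theta(x)=0)\to 0$ provided $\tfrac{k+l}{n}\log p>D(P_{\hat{U}X}\|P_ZP_X)$. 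For uniqueness inside the bin, each of the $p^l-1$ other codewords of $\mathfrak{B}_m$ is independent of the true one and is typical with $s$ with probability at most $2^{-n[D(P_{\hat{U}S}\|P_ZP_S)-O(\epsilon)]}$, so a union bound drives the decoding error to $0$ whenever $\tfrac{l}{n}\log p<D(P_{\hat{U}S}\|P_ZP_S)$. Subtracting gives $R=\tfrac{k}{n}\log p\approx I(\hat{U};X)-I(\hat{U};S)$, while joint $\epsilon$-typicality of $(x,s,\hat{u})$ and continuity of $d$ control the empirical distortion.

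Stages (ii) and (iii) then import the quantization and clipping arguments from Section \ref{section:channel_coding} almost verbatim: quantizing $U$ by the dyadic partition $Q_{\gamma_n,p}$ and invoking Theorem \ref{theorem:mutual_inf_continuous} produces discrete $\hat{U}_{\gamma_n,p}$ with $I(\hat{U}_{\gamma_n,p};X)\to I(U;X)$ and $I(\hat{U}_{\gamma_n,p};S)\to I(U;S)$; the weak*-convergence lemma already proved there forces $\mathds{E}\{d(X,f(S,\hat{U}_{\gamma_n,p}))\}\to\mathds{E}\{d(X,f(S,U))\}\le D$ as soon as $(x,s,u)\mapsto d(x,f(s,u))$ is bounded continuous, and clipping $X$ to $[-l,l]$ with $l\to\infty$, combined with continuity of $d$ and dominated convergence, then removes the boundedness assumption.

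The main obstacle I foresee lies in this last pass-through: $f$ is only assumed measurable, so $d(x,f(s,u))$ need not be continuous in $(x,s,u)$, and weak*-convergence of $P_{S\hat{U}_{\gamma_n,p}X}$ to $P_{SUX}$ does not by itself imply convergence of $\mathds{E}\{d(X,f(S,\hat{U}_{\gamma_n,p}))\}$. The remedy I have in mind is Lusin's theorem, which supplies a continuous $\tilde{f}$ agreeing with $f$ off a set of arbitrarily small $P_{SU}$-measure; then $d(\cdot,\tilde{f}(\cdot,\cdot))$ is bounded continuous after clipping, weak* convergence yields the desired limit for $\tilde{f}$, and a routine $\varepsilon/3$ argument exploiting the small measure of $\{f\neq\tilde{f}\}$ transfers the bound back to $f$. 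Once the distortion passes to the limit, the rate formula $R=I(U;X)-I(U;S)$ survives, and the theorem follows.
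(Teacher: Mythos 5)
Your discrete-$\hat U$ argument is essentially the paper's after a duality change of coordinates: the paper builds the nested code from the \emph{parity-check} representation (random $H,\Delta H,c,\Delta c$), so the outer lattice $\Lambda_o=\{u : Hu=c\}$ has $p^{n-l}$ points w.h.p.\ and the covering condition reads $\tfrac{n-l}{n}\log p> D(P_{\hat U X}\|P_Z P_X)$, while you reuse the \emph{generator-matrix} construction from the channel section (random $G,\Delta G,B$), so the outer code has $p^{k+l}$ message tuples and covering reads $\tfrac{k+l}{n}\log p> D(P_{\hat U X}\|P_Z P_X)$. Dually, the paper's bin-uniqueness condition is $\tfrac{n-k-l}{n}\log p< D(P_{\hat U S}\|P_Z P_S)$ while yours is $\tfrac{l}{n}\log p< D(P_{\hat U S}\|P_Z P_S)$; both differences give the same rate $\tfrac{k}{n}\log p\approx I(\hat U;X)-I(\hat U;S)$, and both rest on the same pairwise-independence lemmas and divergence exponents. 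One cosmetic point: you write $\theta(x)=\sum_{u\in\bar\Lambda_o}\mathds{1}\{u\in A_\epsilon^n(\hat U|x)\}$ but then credit it expectation $p^{k+l}2^{-n[\cdot]}$, which is really the sum over index pairs $(a,m)\in\mathds{Z}_p^l\times\mathds{Z}_p^k$, not over distinct lattice points (the map $(a,m)\mapsto g(a,m)$ need not be injective). This is exactly why the paper switches to the parity-check form for the source problem --- it pins the cardinality of $\Lambda_o$ at $p^{n-l}$ outright via a rank argument --- but the second-moment argument works unchanged over index pairs, so your version is fine once you are explicit about what is being summed.

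Where you go genuinely beyond the paper is the distortion accounting. The paper's source section analyzes only the covering and bin-collision exponents, never verifies that the empirical distortion $\tfrac1n\sum_i d(x_i,f(s_i,\hat u_i))$ concentrates near $\mathds{E}\{d(X,f(S,U))\}$, and compresses the entire quantization/clipping stage into ``is similar to the channel coding problem and is omitted.'' Your observation that weak* typicality of $(x,s,\hat u)$ does \emph{not} control this average when $f$ is merely measurable is correct and unaddressed in the paper; indeed the problem already appears in the discrete-$\hat U$ stage, since $X$ and $S$ remain continuous there. The Lusin-theorem patch --- approximate $f$ by a continuous $\tilde f$ off a set of small $P_{SU}$-measure, run the weak*/bounded-continuous argument for $d(\cdot,\tilde f(\cdot,\cdot))$, then transfer by an $\varepsilon/3$ estimate using boundedness of $d$ after clipping --- is the right repair and is something the paper implicitly relies on without saying so. So: your proof is sound, it is the generator-matrix dual of the paper's parity-check route, and in the distortion step it makes explicit a lemma the paper silently assumes.
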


\subsection{Discrete $U$ and Bounded Continuous Distortion Function}
In this section we prove the theorem for the case when $U$ takes values from the discrete set $\gamma(\mathds{Z}_p-\frac{p-1}{2})$ where $p$ is a prime and $\gamma$ is a positive number. The generalization to the case where $U$ is arbitrary and the distortion function is continuous is similar to the channel coding problem and is omitted. We use a random coding argument over the ensemble of mod-$p$ lattice codes to prove the achievability. The ensemble of codes used for source coding is based on the parity check matrix representation of linear and lattice codes. Define the inner and outer linear codes as in (\ref{eqn:Linear_codes3}) and (\ref{eqn:Linear_codes4}) where $H$ is a random matrix in $\mathds{Z}_p^{{l}\times n}$, $\Delta H$ is a random matrix in $\mathds{Z}_p^{k\times n}$, $c$ is a random vector in $\mathds{Z}_p^{l}$ and $\Delta c$ is a random vector in $\mathds{Z}_p^{k}$. Define $\bar{\Lambda}_i(\mathds{C}_i,\gamma,p)$ and $\bar{\Lambda}_o(\mathds{C}_o,\gamma,p)$ accordingly. The set of messages consists of all bins $\mathfrak{B}_m$ indexed by $m\in \mathds{Z}_p^k$.\\
For $m\in \mathds{Z}_p^k$, Let $\mathfrak{B}_m$ be the $m$th bin of $\Lambda_i$ in $\Lambda_o$. The encoder observes the source sequence $x\in \mathcal{X}^n$ and looks for a vector $u$ in the outer code $\Lambda_o$ which is typical with $x$ and encodes the sequence $x$ to the bin of $\Lambda_i$ in $\Lambda_o$ containing $u$. The encoder declares error if it does not find such a vector.\\
Having observed the index of the bin $m$ and the side information $s$, the decoder looks for a unique sequence $u$ in the $m$th bin which is jointly typical with $s$ and outputs $f(u,s)$. Otherwise it declares error.\\

%The encoder observes the massage $m\in \mathds{Z}_p^k$ and the channel state $s\in \mathcal{S}^n$ and looks for a vector $u$ in the $m$th bin $C_m$ which is jointly weak* typical with $s$ and encodes the massage $m$ to $x=f(u,s)$. The encoder declares error if it does not find such a vector.\\
%After receiving $y\in\mathcal{Y}^n$, the decoder decodes it to $m\in\mathds{Z}_p^k$ if $m$ is the unique tuple such that the $m$th bin $C_m$ contains a sequence jointly typical with $y$. Otherwise it declares error.
\subsubsection{Encoding Error}
Define $S^\prime$ as in (\ref{eqn:S_prime}). For $u\in S^\prime$ define
\begin{align*}
g(u)=\frac{1}{\gamma}u+\frac{p-1}{2}
\end{align*}
$g(u)$ has the following properties:

\begin{lemma}\label{lemma:uniform_lattice}
For $u\in S^\prime$,
\begin{align*}
P(u\in \Lambda_o)=P(Hg(u)=c)=\frac{1}{p^l}
\end{align*}
i.e. All points of $S^\prime$ lie on the outer lattice equiprobably.
\end{lemma}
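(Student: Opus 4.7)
The plan is to unpack the definition of $\Lambda_o$ in terms of the underlying linear code, translate the event $u \in \Lambda_o$ into a parity-check condition on $g(u)$, and then use the fact that the bias vector $c$ is uniformly random and independent of $H$ to evaluate the probability.

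First I would verify that $g$ maps $S'$ into $\mathds{Z}_p^n$. By definition $S' = [-\gamma p/2,\gamma p/2]^n \cap \gamma\mathds{Z}^n$, so any $u \in S'$ has the form $u = \gamma(v - (p-1)/2)$ for some $v \in \{0,1,\ldots,p-1\}^n = \mathds{Z}_p^n$. That $v$ is exactly $g(u) = \tfrac{1}{\gamma}u + \tfrac{p-1}{2}$. Using (\ref{eqn:Lattice:1copy2}), $u \in \Lambda_o$ iff $g(u) \in \mathds{C}_o$, and by (\ref{eqn:Linear_codes3}) this is the same as the parity-check condition $H g(u) = c$ in $\mathds{Z}_p^l$. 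This establishes the first equality $P(u \in \Lambda_o) = P(Hg(u) = c)$ as an identity of events, independent of any randomness.

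Next I would compute $P(Hg(u) = c)$ by conditioning on $H$. Since $c$ is drawn uniformly from $\mathds{Z}_p^l$ and is independent of $H$, for every fixed realization of $H$ (and hence every fixed value of the vector $Hg(u) \in \mathds{Z}_p^l$) we have
\begin{align*}
P\bigl(c = Hg(u)\,\bigl|\,H\bigr) = \frac{1}{p^l}.
\end{align*}
Averaging over $H$ yields $P(Hg(u) = c) = 1/p^l$, which is the claimed value.

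There is essentially no obstacle here; the statement is a direct consequence of the independence and uniformity of the bias vector $c$. The only thing that requires a brief check is the preliminary observation that $g(u) \in \mathds{Z}_p^n$ for every $u \in S'$, which is what makes the parity-check equation $Hg(u) = c$ well-defined as an identity over $\mathds{Z}_p$.
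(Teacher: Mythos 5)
Your proposal is correct and follows the same route as the paper, whose proof is simply the one-line observation that $c$ is uniform over $\mathds{Z}_p^l$ and independent of $H$; you have merely filled in the routine details (that $g(u)\in\mathds{Z}_p^n$ and that $u\in\Lambda_o$ is equivalent to $Hg(u)=c$) that the paper leaves implicit.
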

\begin{IEEEproof}
Follows from the fact that $c$ is independent of $H$ and is uniformly distributed over $\mathds{Z}_p^l$.
\end{IEEEproof}

\begin{lemma}\label{lemma:uniform_indep_lattice}
For $u\in S^\prime$ and $\tilde{u}\in S^\prime$, if $u\ne \tilde{u}$,
\begin{align*}
P\left(u\in \Lambda_o,\tilde{u}\in \Lambda_o\right)=P\left(Hg(u)=c,Hg(\tilde{u})=c\right)=\frac{1}{p^{2l}}
\end{align*}
i.e. All points of $S^\prime$ lie on the outer lattice independently.
\end{lemma}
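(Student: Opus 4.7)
The plan is to reduce the joint event to a pair of linear events on $H$ and $c$ separately and then exploit the independence and uniformity assumptions. First, I would note that $g: S' \to \mathds{Z}_p^n$ defined by $g(u) = \frac{1}{\gamma}u + \frac{p-1}{2}$ is a bijection, so $u \ne \tilde{u}$ in $S'$ forces $g(u) \ne g(\tilde{u})$ in $\mathds{Z}_p^n$. Setting $v = g(u) - g(\tilde{u}) \pmod{p}$, we therefore have $v \in \mathds{Z}_p^n \setminus \{0\}$.

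Next I would rewrite the joint event as
\begin{align*}
\{Hg(u) = c,\ Hg(\tilde{u}) = c\} = \{Hg(u) = c,\ Hv = 0\}.
\end{align*}
Since $c$ is uniform on $\mathds{Z}_p^l$ and independent of $H$, conditioning on $H$ and using the tower property gives
\begin{align*}
P(Hg(u) = c,\ Hv = 0) = \mathds{E}\!\left[\mathds{1}_{\{Hv=0\}} \cdot P(c = Hg(u)\mid H)\right] = \tfrac{1}{p^l}\, P(Hv = 0),
\end{align*}
so the problem reduces to evaluating $P(Hv = 0)$ for a fixed nonzero $v$ and a uniformly random $H \in \mathds{Z}_p^{l\times n}$.

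For this, I would work row-by-row. Since the $l$ rows of $H$ are i.i.d.\ uniform on $\mathds{Z}_p^n$, it suffices to show that a single uniform row $h \in \mathds{Z}_p^n$ satisfies $P(h\cdot v = 0) = 1/p$. Because $v \ne 0$, pick a coordinate $i$ with $v_i \ne 0$; since $p$ is prime, $v_i$ is invertible in $\mathds{Z}_p$, so conditioning on the other coordinates $\{h_j\}_{j\ne i}$ uniquely determines $h_i = -v_i^{-1}\sum_{j\ne i} h_j v_j$, an event of probability $1/p$. Independence across rows then yields $P(Hv=0) = 1/p^l$, and combining with the previous display gives $P(u,\tilde{u}\in \Lambda_o) = 1/p^{2l}$.

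The argument is essentially routine; the only place one needs some care is the reduction $\{Hg(u)=c,\ Hg(\tilde{u})=c\} \Leftrightarrow \{Hg(u)=c,\ Hv=0\}$ and the observation that $v \ne 0$ in $\mathds{Z}_p^n$ (not merely in $\mathds{Z}^n$), which is where the injectivity of the map $g$ on $S'$ is used. Everything else follows from the independence of $H$ and $c$, the uniformity of $c$ on $\mathds{Z}_p^l$, and the fact that $p$ is prime.
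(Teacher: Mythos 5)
Your proposal is correct and follows essentially the same route as the paper's proof: rewrite the joint event as $\{Hg(u)=c,\ H(g(u)-g(\tilde{u}))=0\}$, use the uniformity of $c$ and its independence from $H$ to factor out $1/p^l$, and then use that $g(u)-g(\tilde{u})$ is a nonzero vector of $\mathds{Z}_p^n$ to get $P(Hv=0)=1/p^l$. The only difference is that you spell out the details (tower property over $H$, row-by-row computation of $P(Hv=0)$, injectivity of $g$) that the paper leaves implicit.
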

\begin{IEEEproof}
Note that
\begin{align*}
&P\left(Hg(u)=c,Hg(\tilde{u})=c\right)\\
&\qquad=P\left(Hg(u)=c,H(g(\tilde{u})-g(u))=0\right)\\
&\qquad\stackrel{(a)}{=}P\left(Hg(u)=c\right)\times P\left(H(g(\tilde{u})-g(u))=0\right)\\
&\qquad\stackrel{(b)}{=}\frac{1}{p^{2l}}
\end{align*}
Where $(a)$ follows since $c$ is uniform and independent of $H$ and $(b)$ follows since $H$ and $c$ are uniform and $g(\tilde{u})-g(u)$ is nonzero.
\end{IEEEproof}

For a source sequence $x\in\mathcal{X}^n$, the encoder declare error if there is no sequence $u\in \Lambda_o$ jointly typical with $x$. Define
\begin{align*}
\theta(x)&=\sum_{u\in \Lambda_o}\mathds{1}_{\{u\in A_{\epsilon}^n(\hat{U}|x)\}}
\end{align*}
Let $Z$ be a uniform random variable over $\gamma(\mathds{Z}_p-\frac{p-1}{2}))$ and $Z^n$ a uniform random variable over $S^\prime$. We need the following lemmas to proceed:

\begin{lemma}\label{lemma:size_of_lattice}
With the above construction $|\Lambda_o|=p^{n-l}$ with high probability. Specifically,
\begin{align*}
P\left(\mbox{rank}(H)=l\right)&=\frac{(p^n-1)(p^n-p)(p^n-p^2)\cdots (p^n-p^{l-1})}{p^{nl}}\\
\ge 1-\frac{1}{p^{n-l}}
\end{align*}
and hence the probability that $|\Lambda_o|=p^{n-l}$ is close to one if $n$ is large. Furthermore, for $i=1,2,\cdots,l$,
\begin{align*}
P\left(\mbox{rank}(H)=i\right)\le {l\choose i}\frac{p^{i(l-i)}}{p^{n(l-i)}}
\end{align*}
\end{lemma}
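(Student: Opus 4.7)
The plan is to reduce both statements to elementary counting on the random parity-check matrix $H\in\mathds{Z}_p^{l\times n}$ whose entries are i.i.d.\ uniform on $\mathds{Z}_p$. First, observe that whenever $H$ has full row rank $l$, the map $u\mapsto Hu$ is a surjection onto $\mathds{Z}_p^l$, so the coset $\{u\in\mathds{Z}_p^n\mid Hu=c\}$ is a non-empty affine subspace of dimension $n-l$ for every $c$. Since $\Lambda_o$ is a scaled, translated copy of $\mathds{C}_o$, this gives $|\Lambda_o|=p^{n-l}$ on the event $\{\operatorname{rank}(H)=l\}$, reducing the lemma to estimates on the rank distribution of $H$.

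For the exact formula for $P(\operatorname{rank}(H)=l)$, I would build $H$ row by row and count: the first row can be any of the $p^n-1$ nonzero vectors, and given $j$ linearly independent rows the next row must avoid their $p^j$-element span, leaving $p^n-p^j$ admissible choices. Dividing the product $\prod_{j=0}^{l-1}(p^n-p^j)$ by the total count $p^{nl}$ yields the stated formula, which rewrites as $\prod_{j=0}^{l-1}(1-p^{j-n})$. For the lower bound I would apply the elementary inequality $1-\prod_j(1-x_j)\le\sum_j x_j$ to obtain $P(\operatorname{rank}(H)<l)\le\sum_{j=0}^{l-1}p^{j-n}=p^{-n}\cdot\tfrac{p^l-1}{p-1}$, and then bound $\tfrac{p^l-1}{p-1}\le p^l$ (valid for every prime $p\ge 2$) to conclude $P(\operatorname{rank}(H)=l)\ge 1-p^{l-n}$.

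For the upper bound on $P(\operatorname{rank}(H)=i)$, the plan is a union bound over the $\binom{l}{i}$ index sets of size $i$. If $\operatorname{rank}(H)=i$, some $i$-subset $S\subseteq\{1,\ldots,l\}$ of rows spans the full rowspace, so every row outside $S$ lies in the span of rows in $S$. Fix such an $S$; conditionally on the rows indexed by $S$, that span contains at most $p^i$ vectors out of $p^n$, and each of the remaining $l-i$ rows is independent and uniform on $\mathds{Z}_p^n$, so each falls in the span with probability at most $p^{i-n}$. Hence the conditional probability that all $l-i$ remaining rows land in the span is at most $p^{(i-n)(l-i)}$, and summing over the $\binom{l}{i}$ choices of $S$ yields $P(\operatorname{rank}(H)\le i)\le\binom{l}{i}p^{i(l-i)}/p^{n(l-i)}$, which dominates $P(\operatorname{rank}(H)=i)$ and gives the claim.

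I do not foresee a real obstacle; the only care needed is to make sure the union bound genuinely covers the event $\{\operatorname{rank}(H)=i\}$ (it does, because any rank-$i$ matrix admits at least one $i$-subset of rows whose span equals the rowspace) and that the combinatorial arithmetic in the row-by-row count gives precisely the product in the lemma. The argument is entirely elementary linear algebra over $\mathds{Z}_p$ and does not rely on any of the lattice-specific machinery from earlier in the section.
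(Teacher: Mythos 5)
Your proposal is correct and follows essentially the same route as the paper: the exact rank probability comes from the same row-by-row count of matrices with independent rows, and the bound on $P(\mbox{rank}(H)=i)$ is the same union bound over $\binom{l}{i}$ row subsets with the remaining $l-i$ rows confined to a span of size at most $p^i$ (the paper phrases it as a matrix count, you phrase it via conditioning, which is equivalent for uniform $H$). You additionally spell out the elementary steps the paper leaves implicit, namely $1-\prod_j(1-p^{j-n})\le\sum_j p^{j-n}\le p^{l-n}$ and the fact that full row rank forces $|\Lambda_o|=p^{n-l}$, which is fine.
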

\begin{proof}
The first part of the lemma follows since the total number of choices for $H$ is equal to $p^{nl}$ and the number of choices with independent rows is equal to $(p^n-1)(p^n-p)(p^n-p^2)\cdots (p^n-p^{l-1})$. Now we show the upper bounds. For a matrix $H$ to have a rank $i$, there should exist $i$ independent rows and the rest of the rows must be a linear combination of these rows (There are $p^i$ of such linear combinations). Hence the total number of such matrices is upper bounded by
\begin{align*}
{l\choose i}(p^n-1)(p^n-p)(p^n-p^2)\cdots (p^n-p^{i-1})(p^i)^{l-i}
\end{align*}
The lemma follows if we upper bound this quantity by
\begin{align*}
{l\choose i} p^{ni} p^{i(l-i)}
\end{align*}
\end{proof}

\begin{lemma}
With $\theta(x)$ and $Z^n$ defined as above, we have
\begin{align*}
&\mathds{E}\{\theta(x)\}\le p^{n-l}P\left(Z^n\in A_{\epsilon}^n(\hat{U}|s)\right)+\frac{2^l}{p^{n(l-1)}}\\
&\mathds{E}\{\theta(x)\}\ge (1-\frac{1}{p^{n-l}})p^{n-l}P\left(Z^n\in A_{\epsilon}^n(\hat{U}|s)\right)
\end{align*}
\end{lemma}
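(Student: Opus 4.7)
The plan is to compute $\mathds{E}\{\theta(x)\}$ exactly via a simple interchange of sum and expectation. Since $\Lambda_o \subseteq S'$, the random sum can be rewritten as a deterministic sum over $S'$ with a random indicator:
\begin{align*}
\theta(x) = \sum_{u \in S'} \mathds{1}_{\{u \in \Lambda_o\}} \, \mathds{1}_{\{u \in A_{\epsilon}^n(\hat{U}|x)\}}.
\end{align*}
Taking expectations and using linearity leaves only the deterministic typicality indicator and the probability that each fixed $u \in S'$ lies on $\Lambda_o$; Lemma \ref{lemma:uniform_lattice} gives the latter as $1/p^l$, uniformly in $u$.

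Next, because $Z^n$ is uniform on $S'$ and $|S'| = p^n$, the cardinality $|S' \cap A_{\epsilon}^n(\hat{U}|x)|$ equals $p^n\, P(Z^n \in A_{\epsilon}^n(\hat{U}|x))$, which yields the exact identity
\begin{align*}
\mathds{E}\{\theta(x)\} = p^{n-l}\, P\bigl(Z^n \in A_{\epsilon}^n(\hat{U}|x)\bigr).
\end{align*}
Both stated inequalities then follow immediately: the upper bound by adding the nonnegative slack $2^l/p^{n(l-1)}$, and the lower bound by multiplying by $(1 - 1/p^{n-l}) \le 1$.

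The only subtle point is reconciling why the statement is two-sided rather than a tight equality. A plausible explanation is that the form anticipates a Chebyshev-style second-moment argument to come: the factor $1 - 1/p^{n-l}$ coincides exactly with the lower bound on $P(\mathrm{rank}(H) = l)$ given by Lemma \ref{lemma:size_of_lattice}, suggesting an alternate route in which one conditions on the full-rank event of $H$, derives the lower bound from that event, and then estimates the low-rank contributions via the second inequality in Lemma \ref{lemma:size_of_lattice} combined with $\theta(x) \le |\Lambda_o|$. That conditioning route is where the hard bookkeeping would live, but the direct linearity argument I sketched above recovers both bounds at once with no such case analysis, so it is the approach I would present.
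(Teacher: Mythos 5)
Your proof is correct, and it reaches a stronger conclusion than the lemma itself: writing $\theta(x)=\sum_{u\in S^\prime}\mathds{1}_{\{u\in\Lambda_o\}}\mathds{1}_{\{u\in A_{\epsilon}^n(\hat{U}|x)\}}$ and applying linearity of expectation with Lemma~\ref{lemma:uniform_lattice} (which gives $P(u\in\Lambda_o)=p^{-l}$ uniformly over $u\in S^\prime$) yields the exact identity $\mathds{E}\{\theta(x)\}=p^{n-l}P\bigl(Z^n\in A_{\epsilon}^n(\hat{U}|x)\bigr)$, from which both stated inequalities follow trivially. This is a genuinely different route from the paper's: the paper takes exactly the conditioning path you anticipated, splitting on $|\Lambda_o|$ (equivalently $\mathrm{rank}(H)$), arguing that conditioned on the full-rank event the enumerated lattice points are uniform over $S^\prime$ so the conditional expectation equals $p^{n-l}P\bigl(Z^n\in A_{\epsilon}^n(\hat{U}|x)\bigr)$, and then absorbing the rank-deficient events via $\theta(x)\le|\Lambda_o|=p^{n-i}$ together with the rank-probability bounds of Lemma~\ref{lemma:size_of_lattice}; that is precisely where the slack term $2^l/p^{n(l-1)}$ and the factor $1-\frac{1}{p^{n-l}}$ (the full-rank probability) come from, as you correctly guessed. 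Your argument buys an exact equality with no case analysis and sidesteps the somewhat delicate claim about conditional uniformity of the points of $\Lambda_o$ given its size, whereas the paper's argument explains the two-sided form of the statement and exercises the rank machinery it has already set up; either version supports the subsequent estimate $\mathds{E}\{\theta(x)\}=p^{n-l}2^{-n[D(P_{\hat{U}X}\|P_ZP_X)+O(\epsilon)]}$ and the Chebyshev step that follows (note the $s$ appearing on the right-hand sides of the lemma statement is evidently a typo for $x$, which both arguments treat as such).
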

\begin{proof}
%Define an order $\le$ for the elements of $S^\prime$. and $u_1(\Lambda_o)\le u_2(\Lambda_o)\le \cdots\le u_r(\Lambda_o)$
Write the random lattice $\Lambda_o$ as $\{u_1(\Lambda_o),u_2(\Lambda_o),\cdots,u_{r}(\Lambda_o)\}$ where $r$ is the cardinality of $\Lambda_o$ and $u_1(\Lambda_o),u_2(\Lambda_o),\cdots,u_{r}(\Lambda_o)$ are picked without replacement from $\Lambda_o$. It follow from Lemma \ref{lemma:uniform_lattice} that given $|\Lambda_o|=r=p^{n-l}$, $u_1(\Lambda_o), u_2(\Lambda_o),\cdots,u_{r}(\Lambda_o)$ are each uniformly distributed random variables over $S^\prime$. To see this note that for arbitrary $u\in S^\prime$, since $u_1(\Lambda_o),u_2(\Lambda_o),\cdots,u_{r}(\Lambda_o)$ are picked randomly from $\Lambda_o$,
\begin{align*}
P\left(u=u_1(\Lambda_o)\right)=P\left(u=u_2(\Lambda_o)\right)=\cdots=P\left(u=u_r(\Lambda_o)\right)
\end{align*}
Therefore
\begin{align*}
P(u\in \Lambda_o)&=\sum_{i=1}^r P\left(u=u_i(\Lambda_o)\right)\\
&=r P\left(u=u_1(\Lambda_o)\right)
=\frac{1}{p^l}
\end{align*}
Hence if $r=p^{n-l}$ then $u_1(\Lambda_o)$ is uniform over $S^\prime$. This argument is valid for all $i=1,\cdots,r$ and hence if $r=p^{n-l}$ then $u_i(\Lambda_o)$ is uniform over $S^\prime$.
Note that
\begin{align*}
\mathds{E}\{\theta(x)\}&=\mathds{E}\{\mathds{E}\{\theta(x)||\Lambda_o|=r\}\}
\end{align*}
The conditional expectation on the right hand side of this equation is upper bounded by $p^{n-l}$ and for $r=p^{n-l}$ it is equal to
\begin{align*}
\mathds{E}\{\theta(x)||\Lambda_o|=p^{n-l}\}&=\mathds{E}\{\sum_{u\in \Lambda_o}\mathds{1}_{\{u\in A_{\epsilon}^n(\hat{U}|x)\}}\}\\
&=\mathds{E}\{\sum_{i=1}^{p^{n-l}}\mathds{1}_{\{u_i(\Lambda_o)\in A_{\epsilon}^n(\hat{U}|x)\}}\}\\
&=\sum_{i=1}^{p^{n-l}} P\left(u_i(\Lambda_o)\in A_{\epsilon}^n(\hat{U}|x)\right)\\
&\stackrel{(a)}{=}\sum_{i=1}^{p^{n-l}} P\left(Z^n\in A_{\epsilon}^n(\hat{U}|x)\right)\\
&=p^{n-l} P\left(Z^n\in A_{\epsilon}^n(\hat{U}|x)\right)
\end{align*}
Where $(a)$ follows since $u_i(\Lambda_o)$ is uniformly distributed over $S^\prime$ for all $i=1,\cdots,r$. Next note that
\begin{align*}
\mathds{E}\{\theta(x)\}&=\sum_{r=0}^{p^n}P\left(|\Lambda_o|=r\right)\mathds{E}\{\theta(x)||\Lambda_o|=r\}\\
&\le P\left(|\Lambda_o|=p^{n-l}\right)r P\left(Z^n\in A_{\epsilon}^n(\hat{U}|x)\right)\\
&\qquad +\sum_{i=0}^{l-1}P\left(|\Lambda_o|=p^{n-i}\right) p^{n-i}\\
&\le p^{n-l}P\left(Z^n\in A_{\epsilon}^n(\hat{U}|s)\right)+\frac{2^l}{p^{n(l-1)}}
\end{align*}
Similarly,
\begin{align*}
\mathds{E}\{\theta(x)\}&=\sum_{r=0}^{p^n}P\left(|\Lambda_o|=r\right)\mathds{E}\{\theta(x)||\Lambda_o|=r\}\\
&\ge P\left(|\Lambda_o|=p^{n-l}\right)r P\left(Z^n\in A_{\epsilon}^n(\hat{U}|x)\right)\\
&\ge (1-\frac{1}{p^{n-l}})p^{n-l}P\left(Z^n\in A_{\epsilon}^n(\hat{U}|s)\right)
\end{align*}
\end{proof}

Therefore,
\begin{align*}
\mathds{E}\{\theta(s)\}=p^{n-l}2^{-n[D(P_{\hat{U}X}\|P_ZP_X)+O(\epsilon)]}
\end{align*}
Similarly,
\begin{align*}
\theta(x)^2&=\sum_{u,\tilde{u}\in \Lambda_o}\mathds{1}_{\{u,\tilde{u}\in A_{\epsilon}^n(\hat{U}|x)\}}\\
&=\sum_{u\in \Lambda_o}\mathds{1}_{\{u\in A_{\epsilon}^n(\hat{U}|x)\}}+\sum_{u\ne\tilde{u}\in \Lambda_o}\mathds{1}_{\{u,\tilde{u}\in A_{\epsilon}^n(\hat{U}|x)\}}\\
&\le\sum_{u\in \Lambda_o}\mathds{1}_{\{u\in A_{\epsilon}^n(\hat{U}|x)\}}+\sum_{u,\tilde{u}\in \Lambda_o}\mathds{1}_{\{u,\tilde{u}\in A_{\epsilon}^n(\hat{U}|x)\}}\\
\end{align*}
It can be shown that
\begin{align*}
\mathds{E}\{\theta(x)^2\}&=\mathds{E}\{|\Lambda_o|\}P{\left(Z^n\in A_{\epsilon}^n(\hat{U}|x)\right)}\\
&+\mathds{E}\{|\Lambda_o|\}^2P{\left(Z^n\in A_{\epsilon}^n(\hat{U}|x)\right)^2}\\
&\le p^{n-l}2^{-n[D(P_{\hat{U}X}\|P_ZP_X)+O(\epsilon)]}\\ &+p^{2(n-l)}2^{-2n[D(P_{\hat{U}X}\|P_ZP_X)+O(\epsilon)]}
\end{align*}
Hence
\begin{align*}
\mbox{var}\{\theta(x)\}\le p^k2^{-n[D(P_{\hat{U}X}\|P_ZP_X)+O(\epsilon)]}
\end{align*}
Hence,
\begin{align*}
P(\theta(s)=0)\le \frac{\mbox{var}\{\theta(x)\}}{\mathds{E}\{\theta(x)\}^2}\le p^{-(n-l)}2^{n[D(P_{\hat{U}X}\|P_ZP_X)+O(\epsilon)]}
\end{align*}
Therefore if
\begin{align}\label{eqn:enc_err_source}
\frac{l}{n}\log p<\log p-D(P_{\hat{U}X}\|P_ZP_X)
\end{align}
then the probability of encoding error goes to zero as the block length increases.
\subsubsection{Decoding Error}
After observing $m$ and the side information $s$, the decoder declares error if it does not find a sequence in the bin $\mathfrak{B}_m$ jointly typical with $s$ or if there are multiple of such sequences. We will show that the probability that a sequence $\tilde{u}\ne u$ is in the same bin as $u$ and is jointly typical with $s$ goes to zero as the block length increases if $\frac{k+l}{n}\log p> \log p-D(P_{\hat{U}S}\|P_ZP_S)$. The probability of decoding error is upper bounded by
\begin{align*}
P_{err}&\le \sum_{\tilde{u}\in \prime{S}^n}P\left(u\in\mathfrak{B}_m,u\in A_\epsilon^n(\hat{U}|s)\right)\\
&=\sum_{\tilde{u}\in \prime{S}^n} P\left(u\in\mathfrak{B}_m\right)P\left(Z^n\in A_\epsilon^n(\hat{U}|s)\right)\\
&=\frac{p^n}{p^{k+l}}2^{-n[D(P_{\hat{U}S}\|P_ZP_S)+O(\epsilon)]}
\end{align*}
Hence the probability of decoding error goes to zero if
\begin{align}\label{eqn:enc_err_source}
\frac{k+l}{n}\log p>\log p- D(P_{\hat{U}S}\|P_ZP_S)
\end{align}

\subsubsection{The Achievable Rate} Using (\ref{eqn:enc_err_source}) and (\ref{eqn:enc_err_source}), we conclude that if we choose $\frac{l}{n}\log p$ sufficiently close to $\log p-D(P_{\hat{U}X}\|P_ZP_X)$ and $\frac{k+l}{n}\log p$ sufficiently close to $\log p-D(P_{\hat{U}S}\|P_ZP_S)$ we can achieve the rate
\begin{align*}
R&=\frac{k}{n}\log p\\
&\approx D(P_{\hat{U}X}\|P_ZP_X)-D(P_{\hat{U}S}\|P_ZP_S)\\
&=I(X;\hat{U})-I(S;\hat{U})
\end{align*}

\section{Conclusion}\label{section:conclusion}
We have shown that nested lattice codes are optimal for the Gelfand-Pinsker problem as well as the Wyner-Ziv problem.%achieve the capacity of arbitrary channels with or without non-causal state information at the transmitter. It has also been shown that nested lattice codes suffice to achieve the rate distortion bound for arbitrary sources with or without non-causal side information at the receiver.
%\section{Appendices}

\section{Appendix}
\subsection{Proof of Lemma \ref{lemma:mitran1}}
The proof follows along the lines of the proof of Theorem 21 of \cite{mitran_polish}. Let $Q=\{A_1,A_2,\cdots,A_r\}$ be a finite partition of $\mathds{R}$. Let $Q_{XYZ}$, $Q_{XY}$, $Q_{XZ}$, $Q_{YZ}$, $Q_X$, $Q_Y$ and $Q_Z$ be measures induced by this partition, corresponding to $P_{XYZ}$, $P_{XY}$, $P_{XZ}$, $P_{YZ}$, $P_X$, $P_Y$ and $P_Z$ respectively. For the random sequence $Z^n=(Z_1,\cdots,Z_n)$ and the deterministic sequence $y=(y_1,\cdots,y_n)$ let $\bar{Q}_y$ be the deterministic empirical measure of $y$ and define the random empirical measures
\begin{align*}
&\bar{Q}_{Zy}(A_i,A_j)=\frac{1}{n}\sum_{i=1}^n \mathds{1}_{\{Z_i\in A_i, y_i\in A_j\}}\\
&\bar{Q}_{Z}(A_i)=\frac{1}{n}\sum_{i=1}^n \mathds{1}_{\{Z_i\in A_i\}}
\end{align*}
for $i,j=1,2,\cdots,r$. As a property of weakly* typical sequences, for a fixed $\epsilon_1>0$, there exists a sufficiently small $\epsilon>0$ such that for a sequence pair $(x,y)\in A_{\epsilon}^n(XY)$ and for all $i,j=1,2,\cdots,r$,
\begin{align*}
\left|\bar{Q}_{xy}(A_i,A_j)-Q_{XY}(A_i,A_j)\right|\le \epsilon_1
\end{align*}
where $\bar{Q}_{xy}$ is the joint empirical measure of $(x,y)$. It follows that the rare event $(Z^n,y)\in A_{\epsilon}^n(XY)$ is included in the intersection of events
\begin{align}\label{eqn:ijth_event}
\left\{\left|\bar{Q}_{Zy}(A_i,A_j)-Q_{XY}(A_i,A_j)\right|\le \epsilon_1\right\}
\end{align}
for $i,j=1,2,\cdots,r$. Therefore
\begin{align*}
&Q_Z^n\left((Z^n,y)\in A_{\epsilon}^n(XY)\right)\le\\
&\qquad\qquad Q_Z^n\left(\bigcap_{i,j=1}^r\left\{\left|\bar{Q}_{Zy}(A_i,A_j)-Q_{XY}(A_i,A_j) \right|\le \epsilon_1\right\}\right)
\end{align*}
Let $\epsilon(\delta)$ be such that for $j=1,\cdots,r$,
\begin{align*}
&\left|\bar{Q}_y(A_j)-Q_Y(A_j)\right|\le \epsilon_1\\
&1-\epsilon_1<\frac{\bar{Q}_y(A_j)}{Q_Y(A_j)}<1+\epsilon_1
\end{align*}
Note that if $Q_Y(A_j)=0$ then $Q_{XY}(A_i,A_j)=0$ and hence
\begin{align*}
\left|\bar{Q}_{Zy}(A_i,A_j)-Q_{XY}(A_i,A_j)\right|&=\bar{Q}_{Zy}(A_i,A_j)\\
&\le\bar{Q}_{y}(A_j)\le \epsilon_1
\end{align*}
and (\ref{eqn:ijth_event}) is satisfied. If we choose $\epsilon_1$ smaller than any nonzero $Q_Y(A_j)$ it follows that $\bar{Q}_y(A_j)>0$ whenever $Q_Y(A_j)>0$. Now assume that $Q_Y(A_j)>0$ and hence $\bar{Q}_y(A_j)>0$. Define
\begin{align*}
&Q_{X|Y}(A_i|A_j)=\frac{Q_{XY}(A_i,A_j)}{Q_Y(A_j)}\\
&\bar{Q}_{Z|y}(A_i|A_j)=\frac{\bar{Q}_{Zy}(A_i,A_j)}{\bar{Q}_y(A_j)}
\end{align*}
If $Q_Y(A_j)>0$, the event in (\ref{eqn:ijth_event}) is included in the event
\begin{align}\label{eqn:ijth_conditional}
&\nonumber\{\left|\bar{Q}_{Z|y}(A_i|A_j)\bar{Q}_y(A_j)-Q_{X|Y}(A_i|A_j)\bar{Q}_y(A_j)\right.\\
&\qquad\left.+Q_{X|Y}(A_i|A_j)\bar{Q}_y(A_j)-Q_{X|Y}(A_i|A_j)Q_Y(A_j)\right|\!\le\! \epsilon_1\!\}
\end{align}
Note that
\begin{align*}
&\left|Q_{X|Y}(A_i|A_j)\bar{Q}_y(A_j)-Q_{X|Y}(A_i|A_j)Q_Y(A_j)\right|\\
&\qquad \qquad\qquad \qquad = Q_{X|Y}(A_i|A_j)\left|\bar{Q}_y(A_j)-Q_Y(A_j) \right|\\
&\qquad \qquad\qquad \qquad\le \epsilon_1
\end{align*}
Therefore (\ref{eqn:ijth_conditional}) implies
\begin{align*}
\{\left|\bar{Q}_{Z|y}(A_i|A_j)\bar{Q}_y(A_j)\!-\!Q_{X|Y}(A_i|A_j)\right|\bar{Q}_y(A_j)\le 2\epsilon_1\}
\end{align*}
And this implies
\begin{align*}
\{\left|\bar{Q}_{Z|y}(A_i|A_j)\bar{Q}_y(A_j)-Q_{X|Y}(A_i|A_j)\right|\!\le\! \frac{2\epsilon_1}{\bar{Q}_y(A_j)(\!1\!-\!\epsilon_1\!)}\}
\end{align*}
Let
\begin{align*}
\epsilon_2=\max_{\substack{j=1\\Q_Y(A_j)>0}}^r \frac{2\epsilon_1}{\bar{Q}_y(A_j)(1-\epsilon_1)}
\end{align*}
then the event in (\ref{eqn:ijth_event}) is included in the event
\begin{align*}
\{\left|\bar{Q}_{Z|y}(A_i|A_j)\bar{Q}_y(A_j)-Q_{X|Y}(A_i|A_j)\right|\le \epsilon_2
\end{align*}
Therefore
\begin{align*}
&Q_Z^n\left((Z^n,y)\in A_{\epsilon}^n(XY)\right)\le\\
& Q_Z^n\left(\bigcap_{\substack{i,j=1\\Q_Y(A_j)>0}}^r\left\{\left|\bar{Q}_{Z|y}(A_i|A_j)-Q_{X|Y}(A_i|A_j) \right|\le \epsilon_2\right\}\right)
\end{align*}
Note that since $y$ is a deterministic sequence and $Z_i$'s are iid, the events
\begin{align*}
\left\{\left|\bar{Q}_{Z|y}(A_i|A_j)-Q_{X|Y}(A_i|A_j) \right|\le \epsilon_2\right\}
\end{align*}
are independent for different values of $j=1,\cdots,r$. Let $n_j=n \bar{Q}_y(A_j)$. Then,
\begin{align*}
&Q_Z^n\left((Z^n,y)\in A_{\epsilon}^n(XY)\right)\le\\
& \prod_{\substack{j=1\\Q_Y(A_j)>0}}^r Q_Z^{n_j}\left(\bigcap_{i=1}^r\left\{\left|\bar{Q}_{Z|y}(A_i|A_j)-Q_{X|Y}(A_i|A_j) \right|\le \epsilon_2\right\}\right)
\end{align*}
Since for $Q_Y(A_j)>0$, $n_j\rightarrow\infty$ as $n\rightarrow \infty$, it follows from Sanov's theorem \cite{Dembo_Large_Deviations} that
\begin{align*}
&\lim_{n\rightarrow \infty}\sup\frac{1}{n_j}\log\\
&\qquad\qquad Q_Z^{n_j}\left(\bigcap_{i=1}^r\left\{ \left|\bar{Q}_{Z|y}(A_i|A_j)-Q_{X|Y}(A_i|A_j) \right|\le \epsilon_2\right\}\right)\\
&\qquad\qquad\le -\left[D(Q_{X|Y}(\cdot|A_j)||Q_Z(\cdot))-\delta_{j}\right]
\end{align*}
where $\delta_j\rightarrow 0$ as $\epsilon_2\rightarrow 0$. Therefore
\begin{align*}
&\lim_{n\rightarrow \infty}\sup\frac{1}{n} \log Q_Z^n\left((Z^n,y)\in A_{\epsilon}^n(XY)\right)\\
&\le \sum_{\substack{j=1\\Q_Y(A_j)>0}}^r\lim_{n\rightarrow \infty}\sup\frac{n_j}{n}D(Q_{X|Y}(\cdot|A_j)||Q_Z(\cdot))\\
&\le \sum_{\substack{j=1\\Q_Y(A_j)>0}}^r \!\!\! -(1-\epsilon_1)Q_Y(A_j)\left[D(Q_{X|Y}(\cdot|A_j)||Q_Z(\cdot))\!-\!\delta_{j}\right]\\
&\le -(1-\epsilon_1)D(Q_{XY}||Q_ZQ_Y)+\delta^\prime
\end{align*}
where $\delta^\prime \rightarrow 0$ as $\epsilon_2\rightarrow 0$. For finite $D(P_{XY}||P_ZP_Y)$ the statement of the lemma follows by choosing the quantization $Q$ such that $D(Q_{XY}||Q_ZQ_Y)$ is sufficiently close to $D(P_{XY}||P_ZP_Y)$.

\subsection{Proof of Lemma \ref{lemma:mitran2}}
The proof follows along the lines of the proof of Theorem 22 of \cite{mitran_polish}. Let $Q=\{A_1,A_2,\cdots,A_r\}$ be a finite partition of $\mathds{R}$. Let $Q_{XYZ}$, $Q_{XY}$, $Q_{XZ}$, $Q_{YZ}$, $Q_X$, $Q_Y$ and $Q_Z$ be measures induced by this partition, corresponding to $P_{XYZ}$, $P_{XY}$, $P_{XZ}$, $P_{YZ}$, $P_X$, $P_Y$ and $P_Z$ respectively. For the random sequence $Z^n=(Z_1,\cdots,Z_n)$ and the deterministic sequence $y=(y_1,\cdots,y_n)$ let $\bar{Q}_y$ be the deterministic empirical measure of $y$ and define the random empirical measures
\begin{align*}
&\bar{Q}_{Zy}(A_i,A_j)=\frac{1}{n}\sum_{i=1}^n \mathds{1}_{\{Z_i\in A_i, y_i\in A_j\}}\\
&\bar{Q}_{Z}(A_i)=\frac{1}{n}\sum_{i=1}^n \mathds{1}_{\{Z_i\in A_i\}}
\end{align*}
For arbitrary $\delta>0$, let $Q$ be such that
\begin{align*}
&\pi(Q_{XY},P_{XY})< \epsilon\\
&\pi(Q_{ZY},P_{ZY})< \epsilon\\
&\left|D(P_{XY}||P_ZP_Y)-D(Q_{XY}||Q_ZQ_Y)\right|<\epsilon
\end{align*}
We show that for such a quantization, under certain conditions, the probability of the event
\begin{align*}
\left\{\pi(\bar{Q}_{Zy},Q_{XY})<\epsilon\right\}
\end{align*}
is close to the probability of the event
\begin{align*}
\left\{\pi(\bar{P}_{Zy},P_{XY})<5 \epsilon\right\}
\end{align*}
It follows from Theorem 18 of \cite{mitran_polish} that for arbitrary $\epsilon,\delta^\prime>0$, there exists some $\bar{\epsilon}>0$ such that for all $n$ greater than some $N$ if $y\in A_{\bar{\epsilon}}^n(Y)$, then
\begin{align*}
&\lim_{n\rightarrow\infty}P\left(\pi(\bar{P}_{Zy},P_{ZY})<\epsilon\right)>1-\delta\\
&\lim_{n\rightarrow\infty}P\left(\pi(\bar{Q}_{Zy},Q_{ZY})<\epsilon\right)>1-\delta
\end{align*}
Consider the event
\begin{align*}
\left\{\pi(\bar{Q}_{Zy},Q_{XY})<\epsilon, \pi(\bar{P}_{Zy},P_{ZY})<\epsilon, \pi(\bar{Q}_{Zy},Q_{ZY})<\epsilon\right\}
\end{align*}
This event implies
\begin{align*}
\pi(\bar{P}_{Zy},P_{XY})&\le \pi(\bar{P}_{Zy},P_{ZY})+ \pi(Q_{ZY},P_{ZY})\\
&+ \pi(\bar{Q}_{Zy},Q_{ZY})+ \pi(\bar{Q}_{Zy},Q_{XY})\\
&+ \pi(Q_{XY},P_{XY})\le 5\epsilon
\end{align*}
Therefore
\begin{align*}
&P\left(\pi(\bar{P}_{Zy},P_{XY})\le 5\epsilon\right)\ge\\
&P\left(\pi(\bar{Q}_{Zy},Q_{XY})<\epsilon, \pi(\bar{P}_{Zy},P_{ZY})<\epsilon, \bar{Q}_{Zy},Q_{ZY})<\epsilon\right)
\end{align*}
The right hand side can be lower bounded by
\begin{align}
&1-P\left(\pi(\bar{Q}_{Zy},Q_{XY})\ge \epsilon\right)\\
&- P\left(\pi(\bar{P}_{Zy},P_{ZY})\ge \epsilon\right)- P\left(\bar{Q}_{Zy},Q_{ZY})\ge \epsilon\right)\\
&\ge P\left(\pi(\bar{Q}_{Zy},Q_{XY})< \epsilon\right)-\delta-\delta
\end{align}
Note that for arbitrary $\delta^\prime$ and for sufficiently large $n$,
\begin{align*}
P\left(\pi(\bar{Q}_{Zy},Q_{XY})\right)\ge 2^{-n\left[D(Q_{XY}||Q_ZQ_Y)+\delta^\prime\right]}
\end{align*}
Since $\delta,\delta^\prime$ are arbitrary and $D(Q_{XY}||Q_ZQ_Y)\approx D(P_{XY}||P_ZP_Y)$, it follows that
\begin{align*}
P\left(\pi(\bar{P}_{Zy},P_{XY})\le 5\epsilon\right)\ge 2^{-n\left[D(P_{XY}||P_ZP_Y)+\delta+\epsilon^\prime\right]}-2\delta
\end{align*}

%\bibliographystyle{plain}
%\bibliography{ariabib}
\bibliographystyle{IEEEtran}
\bibliography{IEEEabrv,ariabib}
\end{document}